\providecommand{\U}[1]{\protect\rule{.1in}{.1in}}
\newtheorem{theorem}{Theorem}
\newtheorem{corollary}[theorem]{Corollary}
\newtheorem{definition}[theorem]{Definition}
\newtheorem{lemma}{Lemma}
\newtheorem{proposition}[theorem]{Proposition}
\newtheorem{remark}[theorem]{Remark}
\newenvironment{proof}[1][Proof]{\noindent\textbf{#1.} }{\ \rule{0.5em}{0.5em}}
\newcommand{\brac}[1]{\lbrace #1 \rbrace}
\newcommand{\ket}[1]{| #1 \rangle}
\newcommand{\bra}[1]{\langle #1 |}
\newcommand{\mi}{{\min}}
\newcommand{\ma}{{\max}}
\def\U{\mathcal{U}}
\def\A{\mathcal{A}}
\def\B{\mathcal{B}}
\def\D{\mathcal{D}}
\def\E{\mathcal{E}}
\def\H{\mathcal{H}}
\def\M{\mathcal{M}}
\def\N{\mathcal{N}}
\def\P{\mathcal{P}}
\def\T{\mathcal{T}}
\newcommand{\tr}{\operatorname{Tr}}
\newcommand{\unit}{1\!\!1}
\begin{document}

\title{\textbf{Entanglement-assisted private communication over quantum broadcast channels}}
	
\author{Haoyu Qi\thanks{Hearne Institute for Theoretical Physics, Department of Physics and Astronomy,
		Louisiana State University, Baton Rouge, Louisiana 70803, USA. }
\and Kunal Sharma\footnotemark[1]
\and Mark M. Wilde\footnotemark[1]\,
\thanks{Center for Computation and Technology, Louisiana State University, Baton
	Rouge, Louisiana 70803, USA.}
}

\maketitle
\begin{abstract}
	We consider entanglement-assisted (EA) private communication over a quantum broadcast channel, in which there is a single sender and multiple receivers. We divide the receivers into two sets: the decoding set and the malicious set. The decoding set and the malicious set can either be disjoint or can have a finite intersection. For simplicity, we say that a single party Bob has access to the  decoding set and another party Eve has access to the malicious set, and both Eve and Bob have access to the pre-shared entanglement with Alice. The goal of the task is for Alice to communicate classical information reliably to Bob and securely against Eve, and Bob can take advantage of pre-shared entanglement with Alice. In this framework, we establish a lower bound on the one-shot EA private capacity. When there exists a quantum channel mapping the state of the decoding set  to the state of the malicious set,  such  a broadcast channel is said to be degraded. We establish an upper bound on the one-shot EA private capacity in terms of smoothed min- and max-entropies for such channels.  In the limit of a large number of independent channel uses, we prove that the EA private capacity of a degraded quantum broadcast channel is given by a single-letter formula. Finally, we consider two specific examples of degraded broadcast channels and find their capacities. In the first example, we consider the scenario in which one part of Bob's laboratory is compromised by Eve. We show that the capacity for this protocol is given by the conditional quantum mutual information of a quantum broadcast channel, and so we thus provide an operational interpretation to the dynamic counterpart  of the conditional quantum mutual information. In the second example, Eve and Bob have access to mutually exclusive sets of outputs of a broadcast channel.	
	
\end{abstract}

\section{Introduction}
Among the many results of classical information theory, transmitting private information over wiretap channels is  of both conceptual profoundness and practical  relevance \cite{wyner1975wire}. A wiretap channel is modeled as a conditional probability distribution $p_{Y,Z|X}$, in which $X$ models the information a sender Alice intends to transmit, $Y$ models the outcome obtained by a receiver Bob, and $Z$ models what a malicious third-party  Eve holds. The goal of private communication is for Alice to reliably transmit a given message to Bob, while Eve gets negligible information about the transmitted message.

Private communication in quantum information theory is naturally defined by allowing each party to possess a quantum system, as well as a quantum channel to connect Alice to Bob and Eve.  However, in the quantum setting, it is typical to give Eve full control of the environment of the channel from Alice to Bob \cite{ieee2005dev}. This strongest form of security in the quantum setting is guaranteed by the peculiar nature of quantum mechanics, in the form of the no-cloning theorem and the observer effect. Actually, it is the well-known BB84 quantum key distribution protocol \cite{BB84}, a particular kind of private communication protocol, that played a role in the unification of quantum mechanics and classical Shannon theory, which eventually resulted in the birth of what we call quantum Shannon theory today.

The possibility of exploiting shared quantum entanglement prior to communication has been considered extensively in quantum Shannon theory. The superdense coding protocol \cite{bennett1992communication} was the first example to reveal the power of entanglement in the context of communication, in which, by using one ebit and a noiseless quantum channel, one can transmit two bits of classical information. Entanglement-assisted (EA) classical communication over a quantum channel was thereafter one of the problems considered and solved early on  \cite{bennett1999entanglement,bennett2002entanglement,holevo2002entanglement}. Surprisingly, the use of pre-shared entanglement simplifies the problem of determining capacity, in the sense that the optimal rate is given by a single-letter formula: the quantum mutual information of a quantum channel \cite{bennett1999entanglement, bennett2002entanglement,holevo2002entanglement}. Later on, various EA protocols have been studied, including quantum communication \cite{DHW03,devetak2008resource} and  classical communication over quantum broadcast \cite{yard2011quantum,dupuis2010father,wang2016hadamard} and  multiple-access channels \cite{hsieh2008entanglement,qi2017applications}. However, EA private communication has not been considered to the best of our knowledge, although it is practically meaningful and mathematically well-defined. In this work, we consider a general EA private communication protocol over a single-sender multiple-receiver quantum broadcast channel.

The capacity of a channel is an asymptotic concept, defined in the  limit of a large number of channel uses. This notion, which in many cases is given by a simple formula and invokes powerful tools such as  typicality, is one of Shannon's great contributions \cite{shannon1948}. In an effort to bring this notion closer to practice, recently many works have been devoted to the so-called one-shot theory \cite{renner2008security,datta2013a,datta2013b,matthews2014finite}, which studies the maximum amount of information that can be transmitted over a single use of a quantum channel, subject to the error probability being below a certain threshold. Results in one-shot theory typically not only reduce to correct bounds on the capacity in the independent and identically distributed (i.i.d.) limit, but they are also the foundation for further study of correlated quantum channels \cite{bowen2006beyond,RevModPhys.86.1203} and second-order asymptotics \cite{tomamichel2013hierarchy, li2014second,tomamichel2013second,datta2015second,LD16,DTW14, BDL15,TBR15,datta2016second,wilde2017converse,L16}. 

In this work, we consider a general setting for EA private communication, in which a sender and receivers are connected by a quantum broadcast channel $\N_{A\rightarrow \B\cup\E}$. Here $\B$, called the decoding set, includes the systems that Bob holds, and $\E$, called the malicious set, includes all the systems held by Eve. The sets $\B$ and $\E$ need not be disjoint in our model. An $(M,\varepsilon,\delta)$ EA private code is then defined as a set of encoding and decoding channels, such that $M$ transmitted messages can be decoded by Bob with an error probability no more than $\varepsilon \in[0,1]$, and meanwhile the leakage of information to  Eve (defined in what follows) is no more than $\delta\in[0,1]$. The $\varepsilon$-$\delta$-one-shot EA private capacity, denoted as $C_{\operatorname{EP}}^{\varepsilon,\delta}(\N)$, is the largest number $\log_2 M$ such that there exists an $(M,\varepsilon,\delta)$ code for the channel~$\N$. 

Our first result in Theorem~\ref{thm: one-shot-lower-bound} is the following lower bound on the one-shot EA private capacity for $\varepsilon,\delta \in (0,1]$:
\begin{align}
C_{\operatorname{EP}}^{\varepsilon,\delta}(\N)\geq \sup_{\rho_{RA},
\eta_1\in(0,\varepsilon),
\eta_2\in(0,\delta)
}\left[ I_H^{\varepsilon-\eta_1}(R;\B)_\omega-\tilde{I}_{\max}^{\delta-\eta_2}(R;\E)_\omega-\log_2(4\varepsilon/{\eta_1^2})-2\log_2(1/{\eta_2})\right]~,
\label{eq:lower-bnd-main-result}
\end{align}
where  $\rho_{RA}$ is an arbitrary quantum state and $\omega_{R\B\cup\E} =\N_{A\rightarrow \B\cup\E}(\rho_{RA})$. The first two terms in the above expression are the difference of the hypothesis-testing- and smoothed-max-mutual information, which we formally define in Section~\ref{sec:pre}.

To establish the lower bound in \eqref{eq:lower-bnd-main-result}, we use two recently developed techniques: position-based coding \cite{anshu2017one} and convex splitting \cite{anshu2014quantum}. Position-based coding relates the decoding procedure of the receiver to quantum hypothesis testing, while convex splitting works like a one-shot version of the covering lemma (see, e.g.,  \cite[Chapter~17]{W17} for a discussion of the covering lemma). Also, see \cite{qi2017applications} for further developments on the connection between decoding and hypothesis testing in network quantum information theory. These two techniques have been applied in various settings, including EA classical communication over point-to-point quantum channels and broadcast channels \cite{anshu2017one}, private communication \cite{wilde2017position}, classical communication over quantum multiple-access channels \cite{qi2017applications},  state redistribution \cite{anshu2014quantum}, and the quantum Slepian-Wolf problem \cite{anshu2017generalized}. From one-shot lower bounds, it is straightforward to obtain a lower bound on the second-order coding rate by applying second-order expansions of the hypothesis testing relative entropy \cite{tomamichel2013hierarchy,li2014second,datta2016second}, as done, e.g., in \cite{wilde2017position,qi2017applications}.


Our second result in Theorem~\ref{thm:one-shot-upper-bound} is the following upper bound on the one-shot EA private capacity of a quantum broadcast channel:
\begin{align}
C_{\operatorname{EP}}^{\varepsilon,\delta}(\N) \leq \sup_{\rho_{MRA}}[H_{\operatorname{min}}^{\sqrt{2 \delta}}(M \vert R \E)_{\omega} - H_{\operatorname{max}}^{\sqrt{2 \varepsilon}}(M \vert R \B)_{\omega} ] ~,
\end{align}
where $\rho_{MRA}$ is classical on $M$ and quantum on $RA$, and $\omega_{MR\B\cup\E}=\N_{A\rightarrow \B\cup\E}(\rho_{MRA})$.
The definition of smoothed min- and max-conditional entropies are given in Section~\ref{sec:pre}. Theorem~\ref{thm:one-shot-upper-bound-degraded} presents a different one-shot bound in the case that the broadcast channel is degraded (see Definition \ref{def:degraded-broadcast}).

Next, we define the EA private information of a quantum broadcast channel (see \eqref{eq:EA-private-information}), and we prove that it is additive if the quantum broadcast channel is degraded. 
Finally, we prove that the EA private capacity of a degraded broadcast channel is given by the EA private information of the channel. 

We also consider two special cases of degraded quantum broadcast channels. We briefly summarize our results on the first one here, since it gives an operational meaning to the conditional  mutual information (CMI) of a quantum broadcast channel, extending the recent development in \cite{sharma2017conditional}.  The first scenario consists of one part of Bob's laboratory being compromised by Eve, which can be modeled by a broadcast channel $\N_{A\rightarrow  BE}$ in which Bob's laboratory consists of systems $BE$. Bob has access to both systems while Eve has access to system $E$. In this case, the broadcast channel is degraded since $\tr_{B}\brac{\omega_{RBE}}=\omega_{RE}$, where
$\omega_{RBE}=\N_{A\rightarrow  BE}(\rho_{RA})$ and
$\rho_{RA}$ is a bipartite state.
 We prove a single-letter capacity formula for this task, which is given by the CMI of the broadcast channel:
\begin{align}\label{eq:CMI}
C_{\operatorname{EP}}(\N)=\max_{\phi_{RA}}I(R;B|E)_\omega~.
\end{align}

Table~\ref{tb-1} summarizes how our result on the CMI of a broadcast channel fits into the larger context of prior results in quantum Shannon theory. Optimal rates of communication protocols in quantum Shannon theory are often given by entropic quantities. Or put in another way, different communication protocols give operational meanings to different information quantities. An initial resource can either be static or dynamic. A protocol involving a static resource starts with some initial quantum state and realizes some target state at the end, without using a noisy quantum channel as a resource. On the other hand, a protocol involving a dynamic resource, such as a noisy quantum channel, involves the corruption of information when it is transmitted via this channel. For  protocols involving a dynamic resource, the optimal rate is given by an information function of a quantum channel, which usually involves an optimization over states that are fed into the channel.

\begin{table}
\centering
	\begin{tabu} to \textwidth {|p{1.5cm}|c|c|} 
		\hline
		Entropic quantity & Static setting & Dynamic setting \\ 
		\hline\hline
		$H(A)$ & Schumacher compression \cite{schumacher1995quantum} & N/A \\ 
		\hline
		 & state merging \cite{horodecki2007quantum} & quantum communication \cite{ieee2005dev}\\ 
		   $H(A\vert B)$         & entanglement distillation \cite{DW05} & entanglement transmission/generation\\
		\hline
		 & quantum one-time pad \cite{schumacher2006quantum}& \\
		$I(A;B)$& erasure of quantum correlation \cite{GPW05} & EA communication \cite{bennett1999entanglement}\\
		& quantum Slepian-Wolf \cite{horodecki2007quantum}&\\
		\hline
		 & state redistribution \cite{DY08} & {\textbf{EA private communication}}\\
		         $I(A;B|C)$         & state deconstruction \cite{berta2016deconstruction,BBMW18}& {\textbf{ over quantum broadcast channel}}\\
		                  & conditional quantum one-time pad \cite{sharma2017conditional}& (\textbf{this paper})\\
		\hline
	\end{tabu}
	\caption{Entropic quantities and the corresponding static and dynamic settings. In this work, we establish an operational meaning for the CMI of a quantum broadcast channel $\N_{A\rightarrow BE}$: it is the optimal rate of EA private communication over that  channel (see Section~\ref{sec:ex} for details). }
	\label{tb-1}
\end{table}


The rest of our paper is organized as follows. In Section \ref{sec:pre}, we summarize definitions and lemmas relevant to our proofs. We consider bounds on the one-shot EA private capacity in Section \ref{sec:capacity-degrade}. There we establish both lower and upper bounds on the one-shot EA private capacity of an arbitrary quantum broadcast channel. By combining these results, we arrive at a single-letter formula for the EA private capacity of a degraded quantum broadcast channel in the asymptotic setting. In Section \ref{sec:ex}, we consider two special cases of a two-receiver broadcast channel. As corollaries of our main theorem, we establish EA private capacities for both cases. In the first scenario, we prove that the $\operatorname{CMI}$ of a quantum broadcast channel is the optimal rate. Finally, we summarize our main results and discuss  future directions in Section~\ref{sec:conclusion}.

\section{Preliminaries}

\label{sec:pre}

 We use notation and concepts that are standard in quantum information theory
and point readers to \cite{W17} for background. In the rest of this
section, we review concepts that are less standard and set some notation that
will be used later in the paper.

\bigskip

\noindent\textbf{Trace distance, fidelity, and purified distance.} Let $\mathcal{D}%
(\mathcal{H})$ denote the set of density operators acting on a Hilbert space
$\mathcal{H}$ and  $\mathcal{D}_{\leq}(\mathcal{H})$ the set of
subnormalized density operators (with trace not exceeding one) acting on
$\mathcal{H}$.
The trace distance
between two quantum states $\rho,\sigma\in\mathcal{D}(\mathcal{H})$\ is equal
to $\left\Vert \rho-\sigma\right\Vert _{1}$, where $\left\Vert C\right\Vert_{1}\equiv\operatorname{Tr}\{\sqrt{C^{\dag}C}\}$ for any operator $C$. It has
a direct operational interpretation in terms of the distinguishability of
these states. 
The fidelity between two quantum states is
defined as $F(\rho,\sigma)\equiv\left\Vert \sqrt{\rho}\sqrt{\sigma}\right\Vert
_{1}^{2}$ \cite{U76},
which is
invariant with respect to isometries and monotone non-decreasing with respect
to channels. The sine distance or $C$-distance between two quantum states
$\rho,\sigma\in\mathcal{D}(\mathcal{H})$ is defined as
\begin{equation}
C(\rho,\sigma)\equiv\sqrt{1-F(\rho,\sigma)},
\end{equation}
and it was proven to be a metric in \cite{rastegin2002relative,rastegin2003lower,rastegin2006sine,gilchrist2005distance}. It was
later~\cite{tomamichel2009fully} (under the name \textquotedblleft purified
distance\textquotedblright) shown to be a metric on subnormalized states
$\rho,\sigma\in\mathcal{D}_{\leq}(\mathcal{H})$ via the embedding
\begin{equation}
P(\rho,\sigma)\equiv C(\rho\oplus\left[  1-\operatorname{Tr}\{\rho\}\right]
,\sigma\oplus\left[  1-\operatorname{Tr}\{\sigma\}\right]
)\,.\label{eq:purified-distance}%
\end{equation}
The following inequality relates trace distance and purified distance:%
\begin{equation}
\frac{1}{2}\left\Vert \rho
\oplus\left[  1-\operatorname{Tr}\{\rho\}\right]
-\sigma
\oplus\left[  1-\operatorname{Tr}\{\sigma\}\right]
\right\Vert _{1}\leq P(\rho,\sigma
).\label{eq:TD-to-PD}%
\end{equation}
For a state $\rho\in D(\H)$, we define the ball of $\varepsilon$-close subnormalized states around $\rho$ as
\begin{align}
\B^\varepsilon(\rho) = \brac{\bar{\rho}\in \D_{\leq}(\H) : P(\bar{\rho},\rho)\leq\varepsilon}~.
\end{align}

\bigskip

\noindent\textbf{Relative entropies and variances.} The quantum relative entropy of two
states $\omega$ and $\tau$ is defined as \cite{U62}%
\begin{equation}
D(\omega\Vert\tau)\equiv\operatorname{Tr}\{\omega\lbrack\log_{2}\omega
-\log_{2}\tau]\}
\end{equation}
whenever $\operatorname{supp}(\omega)\subseteq\operatorname{supp}(\tau)$, and
it is equal to $+\infty$ otherwise. 

The hypothesis testing relative entropy \cite{BD10,wang2012one}\ of states $\omega$ and
$\tau$ is defined as%
\begin{equation}
D_{H}^{\varepsilon}(\omega\Vert\tau)\equiv-\log_{2}\inf_{\Lambda}\left\{
\operatorname{Tr}\{\Lambda\tau\}:0\leq\Lambda\leq I\wedge\operatorname{Tr}%
\{\Lambda\omega\}\geq1-\varepsilon\right\}  .
\end{equation}
The max- and min-relative entropy for states $\omega$ and $\tau$ are defined as
\cite{datta2009min,konig2009operational}%
\begin{align}
D_{\max}(\omega\Vert\tau) &\equiv\inf\left\{  \lambda\in\mathbb{R}:\omega
\leq2^{\lambda}\tau\right\}~,\\
D_{\min}(\omega\Vert\tau)
&\equiv-\log_2 F(\omega,\tau)~.
\end{align}
The following relation between the min- and max-relative entropies holds \cite[Theorem~7]{muller2013quantum}
\begin{align}\label{ieq: Dmax>Dmin}
D_{\max}(\omega\Vert\tau)\geq D_{\min}(\omega\Vert\tau)~.
\end{align}
The smoothed max- and min-relative entropy for states $\omega$ and $\tau$, and a parameter
$\varepsilon\in(0,1)$ are defined as \cite{datta2009min,konig2009operational}%
\begin{align}
D_{\max}^{\varepsilon}(\omega\Vert\tau)&\equiv\inf_{\bar{\omega}\in\B^\varepsilon(\omega)} D_{\max}(\bar{\omega}\Vert\tau)~,\\
D_{\min}^{\varepsilon}(\omega\Vert\tau)&\equiv\sup_{\bar{\omega}\in\B^\varepsilon(\omega)} D_{\min}(\bar{\omega}\Vert\tau)~.
\end{align}

\bigskip

\noindent\textbf{Conditional entropies and mutual informations.} 
Conditional entropies play an important role in our converse proof. The max- and min-conditional entropies are defined as \cite{renner2008security}
\begin{align}
\label{eq:Hmax-def}
H_{\max}(A\vert B)_\rho & \equiv -\inf_{\sigma_B\in\D(\H_B)} D_{\min}(\rho_{AB}\Vert \unit_A\otimes\sigma_B)~,\\
\label{eq:Hmin-def}
H_{\min}(A\vert B)_\rho & \equiv -\inf_{\sigma_B\in\D(\H_B)} D_{\max}(\rho_{AB}\Vert \unit_A\otimes\sigma_B)~,
\end{align}
along with their smoothed versions:
\begin{align}
H^\varepsilon_{\max}(A\vert B)_\rho &\equiv \inf_{\bar{\rho}\in\B^\varepsilon(\rho)}H_{\max}(A\vert B)_{\bar{\rho}}~,\\
H^\varepsilon_{\min}(A\vert B)_\rho &\equiv \sup_{\bar{\rho}\in\B^\varepsilon(\rho)}H_{\min}(A\vert B)_{\bar{\rho}}~.
\end{align}
If the $B$ system is trivial, the conditional entropies reduce to max- and min-entropies:
\begin{align}
H_\ma(A)_\rho &= \log_2\left\Vert\sqrt{\rho_A}\right\Vert_1^2~,\\
H_\mi(A)_\rho& = -\log_2\lambda_\ma(\rho_A)~.
\end{align}

 We can define different one-shot mutual informations by using different relative entropies.  It turns out that the max-mutual information often appears in one-shot bounds of various protocols. 
There are several different ways to define max-mutual information in general \cite{BCR09,ciganovic2014smooth}, but what we employ in the convex-split lemma below is  the following variation \cite{anshu2017one}:
\begin{align}
\tilde{I}^\varepsilon_{\max}(B;A)_\rho\equiv \inf_{\rho'\in\B^\varepsilon(\rho)} D_{\max}(\rho'_{AB}\Vert\rho_A\otimes\rho_B')~.
\end{align}
The $\varepsilon$-hypothesis-testing-mutual information is defined here as
\begin{align}
I_H^\varepsilon(A;B)_\rho\equiv D_H^\varepsilon(\rho_{AB}\Vert \rho_A\otimes\rho_B)~.
\end{align}
\bigskip

\noindent\textbf{Hayashi--Nagaoka operator inequality.} A key tool in analyzing error
probabilities in communication protocols is the Hayashi--Nagaoka operator
inequality  \cite{hayashi2003generalCapacity}:\ given operators $S$ and $T$ such that $0\leq S\leq I$
and $T\geq0$, the following inequality holds for all $c>0$:
\begin{equation}
I-(S+T)^{-1/2}S(S+T)^{-1/2}\leq(1+c)(I-S)+(2+c+c^{-1})T. \label{eq:HN-ineq}%
\end{equation}

\bigskip

\noindent\textbf{Convex-split lemma.} The convex-split lemma from \cite{anshu2014quantum} has been
a key tool used in recent developments in quantum information theory
\cite{anshu2014quantum,anshu2017one}. Here, we state a slight variant of the convex-split lemma from \cite{wilde2017position}, which can be
helpful for obtaining one-shot bounds for privacy and ensuing bounds
on second-order coding rates.

Let $\rho_{AB}$ be a state, and let
$\tau_{A_{1}\cdots A_{K}B}$ be the following state:%
\begin{equation}
\tau_{A_{1}\cdots A_{K}B}\equiv\frac{1}{K}\sum_{k=1}^{K}\rho_{A_{1}}%
\otimes\cdots\otimes\rho_{A_{k-1}}\otimes\rho_{A_{k}B}\otimes\rho_{A_{k+1}%
}\otimes\cdots\otimes\rho_{A_{K}}.
\end{equation}
Let $\varepsilon\in(0,1)$ and $\eta\in(0,\sqrt{\varepsilon})$. If%
\begin{equation}
\log_{2}K=~ \tilde{I}_{\max}^{\sqrt{\varepsilon}-\eta}(B;A)_{\rho}%
+2\log_{2}\!\left(  \frac{1}{\eta}\right)  ,
\end{equation}
then%
\begin{equation}
P(\tau_{A_{1}\cdots A_{K}B},\rho_{A_{1}}\otimes\cdots\otimes\rho_{A_{K}%
}\otimes\widetilde{\rho}_{B})\leq\sqrt{\varepsilon},
\label{eq:perf-error-convex-split}%
\end{equation}
for some state $\widetilde{\rho}_{B}$ such that $P(\rho_{B},\widetilde{\rho
}_{B})\leq\sqrt{\varepsilon}-\eta$.



\section{One-shot bounds for EA private communication over a quantum broadcast channel}

\label{sec:capacity-degrade}

We consider a quantum broadcast channel $\N_{A\rightarrow \B\cup\E}$, for which $\B$ is the set of systems held by Bob, while $\E$ is the set of systems held by Eve. We call $\B$ the \textit{decoding set} and $\E$ the \textit{malicious set}. Notice that we do not assume any relationship between the two sets $\B$ and $\E$. For instance, it is possible that $\B\cap\E \neq\emptyset$ or $\E\subset \B$. It is this freedom that gives our model some generality. 

In a protocol for EA private communication,
a sender Alice would like to transmit a classical message $m$, chosen from a set $\M=\brac{1,\ldots,M}$ where $M \in \mathbb{Z}^+$, to Bob via the quantum broadcast channel $\N_{A\rightarrow \B\cup\E}$.  She and the receivers also pre-share entanglement to assist their communication, represented by some bipartite state $\Psi_{RA'}$. Moreover, we also allow Eve to have access to this pre-shared entanglement. 
The goal of EA private communication is for Bob, who holds systems $R\B$, to reliably decode Alice's transmitted message, while Eve, who holds systems $R\E$, can only get negligible information about Alice's message. Fix
$M \in \mathbb{Z}^+$, $\varepsilon\in[0,1]$, and $\delta \in [0,1]$.
We define an $(M,\varepsilon, \delta)$ code to be a set of encoding channels
$\brac{\E^m_{A'\rightarrow A}}_m$
and a decoding positive operator-valued measure (POVM) $\brac{\Lambda^m_{R\B}}_m$, such that

\begin{enumerate} 
\item the classical messages can be reliably decoded by Bob:
\begin{align}\label{condition:reliability}
\max_{m\in\M}p_e(m)\leq \varepsilon~,
\end{align}
where
$p_e(m) = \tr\brac{(I-\Lambda^m_{R\B})\rho^m_{R\B}}$ and
$
\rho^m_{R\B\cup\E} =\N_{A\rightarrow \B\cup\E}(\E^m_{A'\rightarrow A}(\Psi_{RA'}))
$, and
\item each classical message is $\delta$-secure:
\begin{align}\label{condition:security}
\frac{1}{2}\Vert \rho^m_{R\E}-\sigma_{R\E}\Vert_1\leq\delta,~~~\forall m\in\M~,
\end{align}
where $\sigma_{R\E}$ is a fixed state.
\end{enumerate}
In the above, $\sigma_{R\E}$ is some constant state that does not contain any information about Alice's message (one can show that \eqref{condition:security} guarantees that the mutual information $I(M;R\E)$ is small \cite[Section~23.1.1]{W17}). For fixed $\varepsilon, \delta$, let $C_{\operatorname{EP}}^{\varepsilon,\delta}(\N_{A\rightarrow \B\cup\E})$  denote the one-shot EA private capacity, i.e., the largest value of $\log_2 M$ for which there exists an $(M, \varepsilon, \delta)$ code. The EA private capacity of the quantum broadcast channel $\N_{A\rightarrow \B\cup\E}$ is defined as
\begin{align}
C =\lim_{\varepsilon,\delta\rightarrow 0}\liminf_{n\rightarrow\infty} \frac{1}{n} C_{\operatorname{EP}}^{\varepsilon,\delta}(\N^{\otimes n}_{A\rightarrow \B\cup\E}).
\end{align}

In our paper, we focus our attention mostly on degraded quantum broadcast channels, which are defined as follows:
\begin{definition}[Degraded broadcast channel]
	\label{def:degraded-broadcast}
	Let $\N_{A\rightarrow \B\cup\E}$ be a quantum broadcast channel with a decoding set $\B$ and a malicious set $\E$. The channel $\N_{A\rightarrow \B\cup\E}$ is degraded if there exists a quantum channel $\T:\D(\H_{\B})\rightarrow\D(\H_{\E})$ such that
	\begin{align}
	\T(\tr_{\B\cup\E\setminus\B}\brac{\N(\rho)}) = \tr_{\B\cup\E\setminus\E}\brac{\N(\rho)}
	\end{align}
	for all states $\rho\in\D(\H_A)$. Here $\A\setminus\B$ includes all the systems in $\A$ except those in $\B$.
\end{definition}

\subsection{Lower bound on the one-shot EA private capacity} \label{subsec:oneshot}

In this section, we construct a  code for EA private communication based on the techniques of position-based coding \cite{anshu2017one} and convex splitting \cite{anshu2014quantum}. 
\begin{theorem}\label{thm: one-shot-lower-bound}
	Let $\N_{A\rightarrow \B\cup\E}$
	be a quantum broadcast channel
	with decoding set $\B$ and malicious set~$\E$. For fixed $\varepsilon \in (0,1)$ and $\delta\in (0,1)$, the one-shot EA
	private capacity  is bounded from below as
	\begin{align}
	C_{\operatorname{EP}}^{\varepsilon,\delta}(\N)\geq \sup_{\rho_{RA},
	\eta_1\in (0,\varepsilon), \eta_2\in (0,\sqrt{\delta})}I_H^{\varepsilon-\eta_1}(R;\B)_\omega-\tilde{I}_{\max}^{\sqrt{\delta}-\eta_2}(\E;R)_\omega-\log_2\!\left(\frac{4\varepsilon}{\eta_1^2}\right)-2\log_2\!\left(\frac{1}{\eta_2}\right)~.
	\end{align}
In the above, $\rho_{RA}$ is an arbitrary  quantum state and $\omega_{R\B\cup\E}=\N_{A\rightarrow \B\cup\E}(\rho_{RA})$.
\end{theorem}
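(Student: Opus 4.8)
The plan is to construct an explicit $(M,\varepsilon,\delta)$ EA private code by combining position-based coding for reliable decoding with convex splitting for security, following the strategy used in \cite{wilde2017position,qi2017applications}. First I would fix a state $\rho_{RA}$ and let $\omega_{R\B\cup\E}=\N_{A\rightarrow\B\cup\E}(\rho_{RA})$ with purification available through the reference. The shared resource is taken to be $K$ copies of the reference system, where $K=M\cdot L$ for a secret-key padding factor $L$; concretely Alice and the receivers pre-share the state $\omega_{R_1}\otimes\cdots\otimes\omega_{R_K}$ together with Alice's halves $A_1,\dots,A_K$. To send message $m\in\{1,\dots,M\}$, Alice randomly selects an index within the $m$-th block of $L$ indices, sends the corresponding $A$-system through the channel, and discards the rest. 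This is the standard position-based encoding: the message is encoded in \emph{which} tensor factor is correlated with the channel output.

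Next I would handle the two requirements separately. For reliability, Bob must identify which of the $K$ positions was transmitted given his systems $R_1\cdots R_K\B$. I would build a sequential or square-root-measurement decoder from the optimal hypothesis-testing measurement achieving $I_H^{\varepsilon-\eta_1}(R;\B)_\omega$, using the Hayashi--Nagaoka inequality \eqref{eq:HN-ineq} to bound the error of distinguishing the correct correlated state from the $K-1$ uncorrelated alternatives. The usual computation shows the expected error is controlled provided $\log_2 K \lesssim I_H^{\varepsilon-\eta_1}(R;\B)_\omega$ up to the penalty term $\log_2(4\varepsilon/\eta_1^2)$. For security, I would invoke the convex-split lemma: averaging over Alice's random choice within a block makes Eve's state close to the decoupled product $\omega_{R_1}\otimes\cdots\otimes\omega_{R_K}\otimes\widetilde\omega_{\E}$, independent of $m$, as long as the number of indices per block $L$ satisfies $\log_2 L \geq \tilde I_{\max}^{\sqrt{\delta}-\eta_2}(\E;R)_\omega + 2\log_2(1/\eta_2)$. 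Here the system $R$ in the convex-split lemma plays the role of Eve's full correlated register $R\E$, and $\sqrt{\delta}$ is the purified-distance budget that converts via \eqref{eq:TD-to-PD} into the trace-distance security parameter $\delta$ in \eqref{condition:security}.

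The number of genuinely distinguishable messages is then $M = K/L$, so $\log_2 M = \log_2 K - \log_2 L$. Choosing $\log_2 K$ as large as the reliability constraint permits and $\log_2 L$ as small as the security constraint permits yields exactly
\begin{align}
\log_2 M \geq I_H^{\varepsilon-\eta_1}(R;\B)_\omega - \tilde I_{\max}^{\sqrt{\delta}-\eta_2}(\E;R)_\omega - \log_2\!\left(\frac{4\varepsilon}{\eta_1^2}\right) - 2\log_2\!\left(\frac{1}{\eta_2}\right),
\end{align}
and taking the supremum over $\rho_{RA}$, $\eta_1$, and $\eta_2$ gives the claimed bound. A small technical point I would attend to is a derandomization or expectation argument showing that a single good codebook exists (the average over the random index assignment meets both the error and security criteria simultaneously, so some fixed assignment does too).

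The main obstacle, and the step demanding the most care, is coupling the two analyses on the \emph{same} code: position-based coding wants $K$ large to encode many positions reliably, while convex splitting wants enough redundancy $L$ per message to wash out Eve's information, and these must be reconciled so that the security smoothing (purified distance $\sqrt{\delta}$) and the reliability smoothing ($\varepsilon-\eta_1$) combine additively without cross-terms. I expect the delicate bookkeeping to lie in verifying that the convex-split state $\widetilde\omega_{\E}$ and the decoder built from the hypothesis test are compatible, and in correctly translating the purified-distance guarantee of the convex-split lemma into the trace-distance security condition via \eqref{eq:TD-to-PD}, keeping the constants exactly as stated.
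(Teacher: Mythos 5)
Your proposal reproduces the paper's proof essentially step for step: pre-share $MK$ copies of $\rho_{RA}$ (your $K=ML$ with $L$ the local-key padding), decode by position-based coding with the hypothesis-testing measurement and Hayashi--Nagaoka inequality to get $\log_2 MK = I_H^{\varepsilon-\eta_1}(R;\B)_\omega-\log_2(4\varepsilon/\eta_1^2)$, and secure each message via the convex-split lemma with $\log_2 L = \tilde{I}_{\max}^{\sqrt{\delta}-\eta_2}(\E;R)_\omega+2\log_2(1/\eta_2)$, exactly as in the paper. The one superfluous element is your closing derandomization step: position-based coding uses a fixed, deterministic position assignment (the only randomness is Alice's local key $k$, which is part of the protocol and is what the convex-split averaging exploits), and the decoder bound cited by the paper holds for every pair $(m,k)$ individually, so no expectation or expurgation argument over codebooks is needed.
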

\begin{proof} The proof is related to the approach given in \cite{wilde2017position}, which more generally is inspired by the well known approach from \cite{wyner1975wire}.
\\
\newline
	\noindent\textit{Encoding:} Alice and Bob prepare $M$ blocks of entangled states, each of which is the tensor-product of $K$  bipartite states $\rho_{RA}$. That is, we take the pre-shared entangled state before the communication begins to be
	\begin{align}
	\rho_{R^{MK}A^{MK}} = \rho_{R_{(1,1)}A_{(1,1)}}\otimes \rho_{R_{(1,2)}A_{(1,2)}}\otimes \cdots \otimes\rho_{R_{(1,K)}A_{(1,K)}}\otimes \cdots \otimes\rho_{R_{(M,K)}A_{(M,K)}}~.
	\end{align}
	To send message $m$, Alice first chooses a local key variable $k$ uniformly at random and then sends the $(m,k)$th $A$ system through the quantum channel $\N$. Therefore, after the transmission, the state for Bob and Eve is as follows:
	\begin{align}
	\rho^{m,k}_{R^{MK}\B\cup\E}=\rho_{R_{(1,1)}}\otimes \cdots \otimes\rho_{R_{(m,k-1)}}\otimes\omega_{R_{(m,k)}\B\cup \E}\otimes \cdots \otimes\rho_{R_{(M,K)}}~,
	\end{align}
	where $\omega_{R\B\cup\E}=\N_{A\rightarrow \B\cup\E}(\rho_{RA})$.\\
	\newline
	\noindent
	\textit{Reliable decoding:}  From previous work \cite{anshu2017one,wilde2017position,qi2017applications}, we know that as long as 
	\begin{align}\label{eq:MK}
	\log_2 MK = I_H^{\varepsilon-\eta_1}(R;\B)_\omega -\log_2(4\varepsilon/\eta_1^2)~,
	\end{align}
	where $\varepsilon\in (0,1)$ and $\eta_1\in (0,\varepsilon)$, we have the following bound
	\begin{align}
	\tr\brac{(I-\Lambda^{m,k}_{R\B})\rho^{m,k}_{R\B}}\leq \varepsilon~,~~~\forall~m,k~.
	\end{align}
	where $\{\Lambda^{m,k}_{R\B}\}$ is a POVM  built from the test operator for the hypothesis testing relative entropy $D_H^{\varepsilon-\eta_1}(\omega_{R\B} \Vert \omega_R\otimes \omega_B)$ that optimally distinguishes between $\omega_{R\B}$ and $\omega_R\otimes \omega_B$. In particular, see \cite[Theorem~8]{qi2017applications} for more details.\\
	\newline
	\noindent\textit{Security:} Since for each message $m$, the local key $k$ is chosen uniformly at random, the state held by the malicious party is as follows:
	\begin{align}
	\rho^m_{R^{MK}\E}=\frac{1}{K}\sum_{k=1}^K\rho^{m,k}_{R^{MK}\E}~.
	\end{align}
	Now invoking the convex-split lemma (recalled at the end of Section~\ref{sec:pre}), as long as
	\begin{align}\label{eq:key}
	\log_2 K =\tilde{I}_{\max}^{\sqrt{\delta}-\eta_2}(\E;R)_\rho + 2\log_2({1}/{\eta_2})~,
	\end{align}
	where $\eta_2\in (0,\sqrt{\delta})$, we have the following bound for the trace distance:
	\begin{align}
	&\frac{1}{2}\left\Vert \rho ^m_{R^{MK}\E} - \rho_{R^{MK}}\otimes  \tilde{\rho}_{\E}\right\Vert_1 \nonumber\\
	&= \frac{1}{2}\left\Vert \frac{1}{K}\sum_{k=1}^K \rho_{R_{(m,1)}}\otimes\cdots\otimes\omega_{R_{(m,k)}\E}\otimes \cdots \otimes \rho_{R_{(m,K)}}-\rho_{R^K}\otimes\tilde{\rho}_{\E}\right\Vert_1\\
	&\leq  P\!\left(\frac{1}{K}\sum_{k=1}^K \rho_{R_{(m,1)}}\otimes\cdots\otimes\omega_{R_{(m,k)}\E}\otimes \cdots \otimes \rho_{R_{m,K}},\rho_{R^K}\otimes\widetilde{\rho}_{\E}\right)\\
	&\leq  \sqrt{\delta}~,
	\end{align}
	where $\widetilde{\rho}_{\E}$ is a state such that $P(\rho_{\E},\widetilde{\rho}_{\E})\leq\sqrt{\delta}-\eta_2$, and $ \eta_2\in (0,\sqrt{\delta})$.
	The first equality follows from the property $\Vert\sigma\otimes \tau-\omega\otimes\tau\Vert_1=\Vert\sigma-\omega\Vert_1$. The first inequality follows from the definition of purified distance. The last inequality is due to the convex-split lemma and the choice in \eqref{eq:key}.
	
	Therefore, by combining \eqref{eq:MK} and \eqref{eq:key}, we have an $(M,\varepsilon, \delta)$ code with 
	\begin{align}
	\log_2 M &= I_H^{\varepsilon-\eta_1}(R;\B)_\omega-\tilde{I}_{\max}^{\sqrt{\delta}-\eta_2}(\E;R)_\omega-\log_2\!\left(\frac{4\varepsilon}{\eta_1^2}\right)-2\log_2\!\left(\frac{1}{\eta_2}\right)\\
	& = \sup_{\rho_{RA},
	\eta_1\in (0,\varepsilon), \eta_2\in (0,\sqrt{\delta})} [I_H^{\varepsilon-\eta_1}(R;\B)_\omega-\tilde{I}_{\max}^{\sqrt{\delta}-\eta_2}(\E;R)_\omega-\log_2\!\left(\frac{4\varepsilon}{\eta_1^2}\right)-2\log_2\!\left(\frac{1}{\eta_2}\right)]~.
	\end{align}
	The last equality follows because the first equality holds for any input state $\rho_{RA}$, and for any value of $\eta_1 \in (0, \varepsilon)$ and $\eta_2 \in (0, \sqrt{\delta})$.
	Since the one-shot capacity is defined to be the largest value of $\log_2 M$ for which there exists an $(\M,\varepsilon,\delta)$ code, the desired result follows.
\end{proof}

\bigskip

\textbf{Lower bound on the second-order coding rate.} Defining the relative
entropy variance of two states $\omega$ and $\tau$ as \cite{tomamichel2013hierarchy,li2014second}%
\begin{equation}
V(\omega\Vert\tau)=\operatorname{Tr}\{\omega\left[  \log_{2}\omega-\log
_{2}\tau-D(\omega\Vert\tau)\right]  ^{2}\},
\end{equation}
and the inverse cumulative Gaussian distribution function as $\Phi
^{-1}(\varepsilon)\equiv\sup\left\{  a\in\mathbb{R}\ |\ \Phi(a)\leq
\varepsilon\right\}  $, where%
\begin{equation}
\Phi(a)\equiv\frac{1}{\sqrt{2\pi}}\int_{-\infty}^{a}dx\ \exp(-x^{2}/2),
\end{equation}
we can obtain a lower bound on the second-order coding rate for EA\ private
communication, in a way similar to what was reported in \cite{wilde2017position}. Indeed, recall the following second-order expansions \cite{tomamichel2013hierarchy,li2014second,datta2016second}:%
\begin{align}
D_{H}^{\varepsilon}(\omega^{\otimes n}\Vert\tau^{\otimes n})  & =nD(\omega
\Vert\tau)+\sqrt{nV(\omega\Vert\tau)}\Phi^{-1}(\varepsilon)+O(\log n),\\
D_{\max}^{\sqrt{\varepsilon}}(\omega^{\otimes n}\Vert\tau^{\otimes n})  &
=nD(\omega\Vert\tau)-\sqrt{nV(\omega\Vert\tau)}\Phi^{-1}(\varepsilon)+O(\log
n).
\end{align}
Let us define the mutual information variance $V(A;B)_{\rho}$\ of a bipartite
state $\rho_{AB}$ as%
\begin{equation}
V(A;B)_{\rho}\equiv V(\rho_{AB}\Vert\rho_{A}\otimes\rho_{B}).
\end{equation}
Then by taking $\eta_{1}=\eta_{2}=1/\sqrt{n}$, and applying the above
expansions, as well as \cite[Lemma~1]{wilde2017position}, we find the following lower bound on the
second-order coding rate for EA\ private communication over the broadcast
channel $\mathcal{N}$:%
\begin{multline}
\label{eq:2nd-order-lower-bnd}
C_{\operatorname{EP}}^{\varepsilon,\delta}(\mathcal{N}^{\otimes n})\geq n\left[
I(R;\mathcal{B})_{\omega}-I(R;\mathcal{E})_{\omega}\right]  \\
+\sqrt{nV(R;\mathcal{B})_{\omega}}\Phi^{-1}(\varepsilon)+\sqrt
{nV(R;\mathcal{E})_{\omega}}\Phi^{-1}(\delta)+O(\log n),
\end{multline}
for $\varepsilon,\delta\in(0,1)$ and $\omega_{R\B\cup\E}=\N_{A\rightarrow \B\cup\E}(\rho_{RA})$ for some state $\rho_{RA}$.

\subsection{Upper bound on the one-shot EA private capacity}

In the proof of Theorem~\ref{thm:one-shot-upper-bound} below, we derive an  upper bound on the one-shot EA private capacity of a quantum broadcast channel.
This upper bound  coincides with the lower bound from Theorem~\ref{thm: one-shot-lower-bound} in the asymptotic, i.i.d.~limit.
\begin{theorem}
\label{thm:one-shot-upper-bound}
	Let $\N_{A\rightarrow \B\cup\E}$ be a quantum broadcast channel with a decoding set $\B$ and a malicious set $\E$, and let $\varepsilon$, $\delta \in (0,1)$. Then the one-shot EA private capacity is bounded from above as 
	\begin{align} \label{eq:one-shot-upper-bound-broadcast}
	C_{\operatorname{EP}}^{\varepsilon,\delta}(\N) \leq \sup_{\rho_{MRA}}[H_{\operatorname{min}}^{\sqrt{2 \delta}}(M \vert R \E)_{\omega} - H_{\operatorname{max}}^{\sqrt{2 \varepsilon}}(M \vert R \B)_{\omega} ] ~,
	\end{align}
	where $\rho_{MRA}$ is classical on $M$ and quantum on $RA$ and $\omega_{MR\B\cup\E}=\N_{A\rightarrow \B\cup\E}(\rho_{MRA})$.	
\end{theorem}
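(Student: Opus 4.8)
The plan is to prove this converse by a two-sided argument: lower bound the min-entropy term using the $\delta$-security condition, and upper bound the max-entropy term using the $\varepsilon$-reliability condition. First I would fix an arbitrary $(M,\varepsilon,\delta)$ code, send a uniformly random message, and record it in a classical register, forming the state
\[
\rho_{MRA}=\frac{1}{M}\sum_{m=1}^{M}\ket{m}\bra{m}_M\otimes\E^m_{A'\rightarrow A}(\Psi_{RA'}),
\]
which is classical on $M$ and quantum on $RA$, so that $\omega_{MR\B\cup\E}=\N_{A\rightarrow\B\cup\E}(\rho_{MRA})=\frac{1}{M}\sum_m\ket{m}\bra{m}_M\otimes\rho^m_{R\B\cup\E}$ is a feasible point of the supremum on the right-hand side. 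It then suffices to establish, for this particular $\omega$, that
\[
\log_2 M \le H^{\sqrt{2\delta}}_{\min}(M|R\E)_\omega-H^{\sqrt{2\varepsilon}}_{\max}(M|R\B)_\omega .
\]

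For the min-entropy term I would exploit security. Averaging the per-message condition \eqref{condition:security} over $m$ gives $\frac{1}{2}\Vert\omega_{MR\E}-\overline{\pi}_M\otimes\sigma_{R\E}\Vert_1\le\delta$, where $\overline{\pi}_M\equiv\frac{1}{M}\sum_m\ket{m}\bra{m}_M$. Converting trace distance to purified distance through the Fuchs--van de Graaf inequalities (which yield $P\le\sqrt{\Vert\cdot\Vert_1}$) places $\overline{\pi}_M\otimes\sigma_{R\E}$ inside the ball $\B^{\sqrt{2\delta}}(\omega_{MR\E})$. Since the conditional min-entropy of a product state satisfies $H_{\min}(M|R\E)_{\overline{\pi}_M\otimes\sigma}=H_{\min}(M)_{\overline{\pi}_M}=\log_2 M$, the supremum in the definition of the smoothed min-entropy gives $H^{\sqrt{2\delta}}_{\min}(M|R\E)_\omega\ge\log_2 M$.

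For the max-entropy term I would exploit reliability. I would define the decoding measurement channel $\D_{R\B\rightarrow\hat{M}}(\cdot)=\sum_m\tr\{\Lambda^m_{R\B}(\cdot)\}\ket{m}\bra{m}_{\hat M}$ and apply it to $\omega_{MR\B}$, obtaining a classical--classical state $\omega'_{M\hat M}$. A short computation shows that $\frac{1}{2}\Vert\omega'_{M\hat M}-\Phi_{M\hat M}\Vert_1$ equals the average decoding error probability, hence is at most $\varepsilon$, where $\Phi_{M\hat M}\equiv\frac{1}{M}\sum_m\ket{m}\bra{m}_M\otimes\ket{m}\bra{m}_{\hat M}$ is the ideal, perfectly correlated state; therefore $\Phi_{M\hat M}\in\B^{\sqrt{2\varepsilon}}(\omega'_{M\hat M})$. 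Because $M$ is a deterministic function of $\hat M$ under $\Phi$, one has $H_{\max}(M|\hat M)_{\Phi}=0$, and so $H^{\sqrt{2\varepsilon}}_{\max}(M|\hat M)_{\omega'}\le 0$. Finally, invoking the data-processing inequality for the smoothed conditional max-entropy under the channel $\D$, which acts only on the conditioning system $R\B$, gives $H^{\sqrt{2\varepsilon}}_{\max}(M|R\B)_\omega\le H^{\sqrt{2\varepsilon}}_{\max}(M|\hat M)_{\omega'}\le 0$.

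Combining the two bounds yields $H^{\sqrt{2\delta}}_{\min}(M|R\E)_\omega-H^{\sqrt{2\varepsilon}}_{\max}(M|R\B)_\omega\ge\log_2 M$, and taking the supremum over all admissible input states proves the claim. I expect the main obstacles to be bookkeeping rather than conceptual: pinning down the smoothing radii $\sqrt{2\delta}$ and $\sqrt{2\varepsilon}$ exactly through the trace-to-purified-distance conversions, and carefully justifying the data-processing inequality for the smoothed max-entropy together with the two entropy evaluations $H_{\min}(M|R\E)_{\overline{\pi}_M\otimes\sigma}=\log_2 M$ and $H_{\max}(M|\hat M)_{\Phi}=0$.
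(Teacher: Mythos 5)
Your proposal is correct and follows essentially the same route as the paper's proof: the same code-induced state $\omega_{MR\mathcal{B}\cup\mathcal{E}}$, the same security-side argument placing the product state $\overline{\pi}_M\otimes\sigma_{R\mathcal{E}}$ in the $\sqrt{2\delta}$-ball to get $H_{\min}^{\sqrt{2\delta}}(M|R\mathcal{E})_\omega\geq\log_2 M$, and the same reliability-side argument using data processing under the decoding measurement (your $\omega'_{M\hat{M}}$ is exactly the paper's $\sigma_{MM'}$) to get $H_{\max}^{\sqrt{2\varepsilon}}(M|R\mathcal{B})_\omega\leq 0$. The only differences are cosmetic: you invoke Fuchs--van de Graaf where the paper cites the Powers--Stormer inequality for the trace-to-purified-distance conversion, and you use the average error probability where the paper's maximal-error condition trivially implies it.
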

\begin{proof}
We begin by establishing an upper bound on $
\log_2 M$ for an arbitrary $(M,\varepsilon,\delta)$ EA private communication code (with the state $\omega$ defined in what follows), essentially by following an approach similar to that in \cite{renes2011noisy}.
To establish the upper bound, we consider the task of EA secret key distribution, which in turn gives an upper bound on the one-shot EA private capacity. In this task, Alice picks a classical message uniformly at random, places it in a system $M$, and makes a copy of it in a system $M'$. The goal at the end is to produce a secure and perfectly correlated key between her and Bob, such that Bob has a copy of Alice's message. Therefore, the initial state of Alice's systems is as follows:
\begin{align}
\overline{\Phi}_{MM'} =\sum_{m}\frac{1}{M}\ket{m}\bra{m}_M\otimes \ket{m}\bra{m}_{M'}~.
\end{align}
For an arbitrary $(M,\varepsilon,\delta)$ code, the combined state of Bob and Eve's systems after one use of the broadcast channel $\N_{A\rightarrow \B\cup\E}$ is as follows:
\begin{align}
\omega_{MR\B\cup\E} = \frac{1}{M}\sum_m\ket{m}\bra{m}_M\otimes\rho^m_{R\B\cup\E}~,
\end{align}
where $
\rho^m_{R\B\cup\E} =\N_{A\rightarrow \B\cup\E}(\E^m_{A'\rightarrow A}(\Psi_{RA'}))$.
After the decoding procedure, Alice and Bob end up with imperfect shared randomness, represented by the following state:
\begin{align}
\sigma_{MM'}= \frac{1}{M}\sum_{m,m'}p(m'|m)\ket{m}\bra{m}_M\otimes\ket{m'}\bra{m'}_{M'}~.
\end{align}
Here, $p(m'|m)$ is the probability of Bob decoding $m'$ when the message transmitted by Alice is $m$. 

Next, we find an upper bound on the trace distance between $\sigma_{MM'}$ and $\overline{\Phi}_{MM'}$. Consider the following chain of inequalities:
\begin{align}
& \frac{1}{2}\Vert \sigma_{MM'}-\overline{\Phi}_{MM'}\Vert_1 \nonumber \\
& = \frac{1}{2M}\left\Vert\sum_m\ket{m}\bra{m}_M\otimes\left[\sum_{m'}p(m'|m)\ket{m'}\bra{m'}_{M'}-\ket{m}\bra{m}_{M'}\right]\right\Vert_{1}\\
&= \frac{1}{2M} \sum_m\left\Vert \sum_{m'\neq m}p(m'|m)\ket{m'}\bra{m'}_{M'} + (p(m|m)-1)\ket{m}\bra{m}_{M'}\right\Vert_{1}\\
&= \frac{1}{2M}\sum_m \left(2\sum_{m'\neq m}p(m'|m)\right)\\
&\leq \varepsilon~.
\end{align}
The first equality follows from the direct-sum property of trace norm. The second equality follows from the triangle inequality. To obtain the last inequality, we apply the reliable decoding condition of an $(M,\varepsilon,\delta)$ code. 

We now show that $H_{\max}^{\sqrt{2\varepsilon}}(M|R\B)_\omega\leq 0$. Consider the following chain of inequalities:
\begin{align}\label{ineq:H_max<0}
H_{\max}^{\sqrt{2\varepsilon}}(M|R\B)_\omega \leq  H_{\max}^{\sqrt{2\varepsilon}}(M|M')_\sigma \leq H_{\max}(M|M')_{\overline{\Phi}} \leq 0~.
\end{align}
The first inequality follows from the data processing inequality for the smoothed max-conditional entropy (see, e.g., \cite{tomamichel2012framework}). The second inequality follows from the definition of smoothed-max-conditional entropy.

Next, we show that $H_{\min}^{\sqrt{2\delta}}(M\vert R\E)_\omega\geq \log_2 M$, by invoking the security condition of the code. From the security condition, we know that $\frac{1}{2}\Vert\rho^m_{R\E}-\sigma_{R\E}\Vert\leq \delta$ for all messages $m$, and we thus have
\begin{align}
\frac{1}{2}\Vert \omega_{MR\E} -\omega_M\otimes \sigma_{R\E}\Vert_1\leq \delta~.
\end{align}
Therefore, by using the definition of purified distance and the Powers-Stormer inequality \cite{powers1970free}, we find that $\omega_M\otimes\sigma_{R\E}\in \B^{\sqrt{2\delta}}(\omega_{MR\E})$. The rest is straightforward:
\begin{align}\label{ineq:H_min>logM}
H_{\min}^{\sqrt{2\delta}}(M\vert R\E)_\omega \geq H_{\min}(M\vert R\E)_{\omega_M\otimes \sigma_{R\E}} \geq H_{\min}(M)_\omega = \log_2 M~. 
\end{align}
Using \eqref{ineq:H_max<0} and \eqref{ineq:H_min>logM}, we establish the following upper bound on the amount of transmitted information:
\begin{align}
\log_2 M & \leq H_{\min}^{\sqrt{2\delta}}(M\vert R\E)_\omega-H_{\max}^{\sqrt{2\varepsilon}}(M|R\B)_\omega \label{eq:one-shot-not-deg-without-sup}\\
&\leq \sup_{\rho_{MRA}}[H_{\operatorname{min}}^{\sqrt{2 \delta}}(M \vert R \E)_{\omega} - H_{\operatorname{max}}^{\sqrt{2 \varepsilon}}(M \vert R \B)_{\omega} ]~. \label{eq:one-shot-not-deg}
\end{align}
Since these inequalities hold for any value of $\log_2M$, the desired result in \eqref{eq:one-shot-upper-bound-broadcast} follows. 
\end{proof}

\begin{theorem}
\label{thm:one-shot-upper-bound-degraded}
If $\N_{A\rightarrow \B\cup\E}$ is a degraded quantum broadcast channel, then the one-shot EA private capacity is bounded from above as 
	\begin{align}\label{eq:one-shot-upper-bound-degraded-broadcast}
	C_{\operatorname{EP}}^{\varepsilon,\delta}(\N) \leq \sup_{\rho_{RA}}\big[ H_{\min}^{3\sqrt{2\varepsilon}+2\sqrt{2\delta}}(R|\E)_\omega- H_{\max}^{3\sqrt{2\varepsilon}+2\sqrt{2\delta}}(R|\B)_\omega\big]+ f(\varepsilon,\delta)~,
	\end{align}
	where the optimization is over all bipartite states $\rho_{RA}$. 
	Here, $\omega_{R\B\cup\E}=\N_{A\rightarrow \B\cup\E}(\rho_{RA})$, $\delta,\varepsilon \in(0,\frac{1}{8})$, $ 3\sqrt{2\varepsilon}+2\sqrt{2\delta}<1$, and
	 $f(\varepsilon,\delta) = -\log_2(1-\sqrt{1-8\delta})(1-\sqrt{1-8\varepsilon})$.
\end{theorem}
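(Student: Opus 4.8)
The plan is to sharpen the generic converse of Theorem~\ref{thm:one-shot-upper-bound} by exploiting the degrading channel $\T$ to eliminate the message register $M$ in favour of the reference $R$. I would start from an arbitrary $(M,\varepsilon,\delta)$ code, with induced state $\omega_{MR\B\cup\E}=\N_{A\rightarrow\B\cup\E}(\rho_{MRA})$ in which $M$ is classical and uniform, and reuse the two structural facts already isolated in the proof of Theorem~\ref{thm:one-shot-upper-bound}: reliability gives $H_{\max}^{\sqrt{2\varepsilon}}(M|R\B)_\omega\le 0$, and security gives $\omega_{MR\E}\in\B^{\sqrt{2\delta}}(\omega_M\otimes\sigma_{R\E})$, i.e.\ the message is approximately in product with Eve's systems $R\E$. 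Writing $\rho_{RA}:=\tr_M\rho_{MRA}$ makes $\omega_{R\B\cup\E}=\N(\rho_{RA})$ an admissible input for the supremum in \eqref{eq:one-shot-upper-bound-degraded-broadcast}; moreover the pair $MR$ is itself a legitimate reference, so a bound of the form $\log_2 M\le H_{\min}^{\alpha}(MR|\E)_\omega-H_{\max}^{\alpha}(MR|\B)_\omega+(\mathrm{const})$ would automatically be dominated by the stated supremum with $\alpha=3\sqrt{2\varepsilon}+2\sqrt{2\delta}$.

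The guiding identity is the degraded-wiretap converse in entropic form. Asymptotically, $\log_2 M\lesssim I(M;\B|R)_\omega-I(M;\E|R)_\omega$, and since degradability forces $I(R;\B)_\omega\ge I(R;\E)_\omega$, adding this nonnegative quantity passes to $I(MR;\B)_\omega-I(MR;\E)_\omega\le\sup_{\rho_{RA}}[I(R;\B)_\omega-I(R;\E)_\omega]$, the last step using that $MR$ is a valid reference. The one-shot proof realises this template with smooth entropies. Concretely, I would purify $\omega_{MR\B\cup\E}$ and combine the duality relations $H_{\min}^\eta(X|Y)=-H_{\max}^\eta(X|Z)$ (valid for a pure state on $XYZ$) with the chain rules for smooth min- and max-entropies to rewrite $H_{\min}^{\sqrt{2\delta}}(M|R\E)_\omega-H_{\max}^{\sqrt{2\varepsilon}}(M|R\B)_\omega$: the reliability bound frees $M$ from Bob's term, the decoupling $\omega_{MR\E}\approx\omega_M\otimes\sigma_{R\E}$ splits off the additive $\log_2 M$ while leaving $H_{\min}(R|\E)_\omega$, and the relation $\omega_{R\E}=\T(\omega_{R\B})$ with data processing under $\T$ aligns Eve's and Bob's $R$-conditional entropies into the single difference $H_{\min}^{\alpha}(R|\E)_\omega-H_{\max}^{\alpha}(R|\B)_\omega$.

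The main obstacle is the bookkeeping that turns this template into the precise constants. Each use of reliability or security enters as a trace-distance bound that must be converted to a purified-distance smoothing radius via the Fuchs--van de Graaf / Powers--Stormer inequalities, contributing one $\sqrt{2\varepsilon}$ or $\sqrt{2\delta}$; these radii then accumulate through the triangle inequality for purified distance across the duality and chain-rule steps, and I expect exactly three $\sqrt{2\varepsilon}$ and two $\sqrt{2\delta}$ terms to survive. The additive term factors as $f(\varepsilon,\delta)=-\log_2(1-\sqrt{1-8\varepsilon})-\log_2(1-\sqrt{1-8\delta})$ with $8\varepsilon=(2\sqrt{2\varepsilon})^2$ and $8\delta=(2\sqrt{2\delta})^2$, which is exactly the shape $-\log_2(1-\sqrt{1-\mu^2})$ produced by the smooth chain-rule corrections, so the delicate point is to select the chain rules whose losses assemble into precisely this $f$. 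A further subtlety is that, unlike the asymptotic case, the one-shot ordering $H_{\min}(R|\E)_\omega\ge H_{\max}(R|\B)_\omega$ need not hold, so degradability cannot be invoked as a single-system inequality; it must be applied at the level of the global state, through the Stinespring dilation of $\T$ and entropic duality, and keeping the smoothing radii controlled through that step is where I expect the proof to be most demanding.
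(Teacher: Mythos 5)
Your outline reproduces the architecture of the paper's proof: you start from the bound $\log_2 M\le H_{\min}^{\sqrt{2\delta}}(M|R\E)_\omega-H_{\max}^{\sqrt{2\varepsilon}}(M|R\B)_\omega$ inherited from Theorem~\ref{thm:one-shot-upper-bound}, apply the smooth chain rules (the paper uses \cite[Theorem~13]{vitanov2013chain} and its dual, with correction term $g(x)=-\log_2(1-\sqrt{1-x^2})$, which is exactly the shape of $f$ you predicted) to peel $H_{\min}(M|R\E)$ and $H_{\max}(M|R\B)$ into $MR$-conditional terms minus the bracket $H_{\min}^{\delta'}(R|\E)_\omega-H_{\max}^{\varepsilon'}(R|\B)_\omega$, and finally absorb $MR$ into the supremum since it extends the channel input. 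All of that matches the paper. However, there is a genuine gap at the one step you yourself flag as "most demanding": discarding the bracket. The paper does this with two ingredients: \cite[Lemma~9]{morgan2014pretty}, namely $H_{\min}^{\varepsilon}(A|B)_\rho\ge H_{\max}^{\sqrt{1-\varepsilon^4}}(A|B)_\rho$, which converts the min-entropy into a max-entropy \emph{on the same state} at the price of a large smoothing parameter, followed by data processing of the smooth max-entropy under the degrading channel $\T_{\B\to\E}$ acting on the conditioning system, giving $H_{\max}^{\varepsilon'}(R|\E)_\omega\ge H_{\max}^{\varepsilon'}(R|\B)_\omega$. Your proposal contains neither this lemma nor any substitute for it.

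Moreover, the substitute you sketch---applying degradability "at the level of the global state, through the Stinespring dilation of $\T$ and entropic duality"---provably circles back to the same obstruction. Take a purification $\psi_{R\B F}$ of $\omega_{R\B}$ and an isometric dilation $V_{\B\to\E G}$ of $\T$; duality gives $H_{\max}^{\varepsilon'}(R|\B)_\omega=-H_{\min}^{\varepsilon'}(R|F)_\psi$ and $H_{\min}^{\delta'}(R|\E)=-H_{\max}^{\delta'}(R|GF)$, and data processing (tracing out $G$) yields $H_{\max}^{\delta'}(R|GF)\le H_{\max}^{\delta'}(R|F)$, so the bracket's nonnegativity reduces to $H_{\min}^{\varepsilon'}(R|F)\ge H_{\max}^{\delta'}(R|F)$---a min-versus-max comparison on a single state, which is precisely what you started with, and which goes the \emph{wrong} way in general (one has $H_{\min}\le H_{\max}$, cf.~\eqref{ieq: Dmax>Dmin} at the level of relative entropies). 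Applying duality again only oscillates between the two equivalent forms. So some inequality of the Morgan--Winter type, trading a small smoothing parameter on $H_{\min}$ for a near-unity one on $H_{\max}$, is indispensable here, and without it your argument cannot close; this is also the mechanism that forces the constraints $\varepsilon,\delta<\frac18$ and the particular constants in the theorem, which your bookkeeping heuristics would not recover on their own.
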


\begin{proof}
	Suppose that the quantum broadcast channel $\N_{A\rightarrow\B\cup\E}$ is degraded. In what follows, we apply the following chain rules to \eqref{eq:one-shot-not-deg-without-sup} for $\varepsilon,\varepsilon',\varepsilon''\in(0,1)$ 	
	(\cite[Theorem~13]{vitanov2013chain} and its dual):
	\begin{align}
	H_{\min}^{\varepsilon}(A|BC)_\rho \leq H_{\min}^{\varepsilon'+2\varepsilon+\varepsilon''}(AB|C)_\rho - H_{\min}^{\varepsilon'}(B|C)_\rho+g(\varepsilon'')~,
	\label{eq:chain-1}\\
	H_{\max}^{\varepsilon}(A|BC)_\rho \geq H_{\max
	}^{\varepsilon+2\varepsilon'+\varepsilon''}(AB|C)_\rho - H_{\max}^{\varepsilon'}(B|C)_\rho -g(\varepsilon'')~\label{eq:chain-2},
	\end{align}
	where $g(x)=-\log_2(1-\sqrt{1-x^2})$ and $g(x)\sim \log_2({x}^{-1})$ for small $x$. We then have that
	\begin{align}
	\log_2 M &\leq H_{\min}^{\delta'+2\sqrt{2\delta}+\delta''}(MR|\E)_\omega - H_{\min}^{\delta'}(R|\E)_\omega + g({\delta''})\nonumber\\
	& \qquad -\left[H_{\max}^{\sqrt{2\varepsilon}+2\varepsilon'+\varepsilon''}(MR|\B)_\omega - H_{\max}^{\varepsilon'}(R|\B)_\omega -g(\varepsilon'')\right]\\
	&= H_{\min}^{\delta'+2\sqrt{2\delta}+\delta''}(MR|\E)_\omega - H_{\max}^{\sqrt{2\varepsilon}+2\varepsilon'+\varepsilon''}(MR|\B)_\omega -\left[H_{\min}^{\delta'}(R|\E)_\omega-H_{\max}^{\varepsilon'}(R|\B)_\omega\right] \nonumber \\
	&\qquad +g(\delta'')+g(\varepsilon'')~, \label{ineq: 1}
	\end{align}
	for $\varepsilon,\varepsilon',\varepsilon'',\delta,\delta',\delta''\in(0,1)$.

To proceed from here, we invoke Lemma~9 from \cite{morgan2014pretty}:
\begin{align}\label{ineq:from-pretty-converse}
H_{\min}^\varepsilon(A\vert B)_\rho \geq H_{\max}^{\sqrt{1-\varepsilon^4}}(A\vert B)_\rho~,
\end{align}	
for $\rho\in\D(\H_{AB})$, and $0\leq\varepsilon \leq 1$. Substituting \eqref{ineq:from-pretty-converse} into \eqref{ineq: 1}, we find that
\begin{align}
\log_2 M &\leq  H_{\min}^{\delta'+2\sqrt{2\delta}+\delta''}(MR|\E)_\omega - H_{\max}^{\sqrt{2\varepsilon}+2\varepsilon'+\varepsilon''}(MR|\B)_\omega -\left[H_{\max}^{\sqrt{1-\delta'^4}}(R|\E)_\omega-H_{\max}^{\varepsilon'}(R|\B)_\omega\right] \nonumber \\
&\qquad +g(\delta'')+g(\varepsilon'')~.
\end{align}

We now fix $\varepsilon'= \sqrt{1-\delta'^4}$. By using the data-processing inequality of smoothed-max-conditional entropy (see, e.g., \cite{tomamichel2012framework}) under the action of a degrading channel $\T_{\B \to \E}$, we get
\begin{align}
H_{\max}^{\varepsilon'}(R|\E)_\omega \geq H_{\max}^{\varepsilon'}(R|\B)_\omega~.
\end{align}

Therefore, we can discard the terms inside of the square bracket in \eqref{ineq: 1}, and by choosing $\sqrt{1-\delta'^4}=\varepsilon'=\sqrt{2\varepsilon}$, $ \varepsilon''=2\sqrt{2\delta}$, $ \delta''=2\sqrt{2\varepsilon}$, we find
	\begin{align}
	\log_2 M &\leq H_{\min}^{3\sqrt{2\varepsilon}+2\sqrt{2\delta}}(MR|\E)_\omega - H_{\max}^{3\sqrt{2\varepsilon}+2\sqrt{2\delta}}(MR|\B)_\omega +f(\varepsilon,\delta)~,\\
	&\leq \sup_{\rho_{R'A}}\left\lbrace H_{\min}^{3\sqrt{2\varepsilon}+2\sqrt{2\delta}}(R'|\E)_\omega - H_{\max}^{3\sqrt{2\varepsilon}+2\sqrt{2\delta}}(R'|\B)_\omega\right\rbrace +f(\varepsilon,\delta)~,
	\end{align}
	where $f(\varepsilon) =- \log_2(1-\sqrt{1-8\varepsilon})(1-\sqrt{1-8\delta})$, and thus we need to impose the constraints $\delta,\varepsilon<\frac{1}{8}$ and $ 3\sqrt{2\varepsilon}+2\sqrt{2\delta}<1$. The last step follows since systems $M$ and $R$ extend the input of channel $\N$. Since these inequalities hold for any value of $\log_2M$, the desired result in \eqref{eq:one-shot-upper-bound-degraded-broadcast} follows. 
\end{proof}

\begin{remark}
The optimization in Theorem~\ref{thm:one-shot-upper-bound-degraded} is with respect to mixed-state inputs with a potentially unbounded reference system $R$. This could be viewed as undesirable. To get around this problem, we consider a purifying system $R'$ for the input state $\rho_{RA}$, iterate once more with the chain rules in \eqref{eq:chain-1} and \eqref{eq:chain-2}, and arrive at the following upper bound:
\begin{equation}
	C_{\operatorname{EP}}^{\varepsilon,\delta}(\N) \leq \max_{\psi_{RA}}\big[ H_{\min}^{3\sqrt{2\varepsilon'}+2\sqrt{2\delta'}}(R|\E)_\omega- H_{\max}^{3\sqrt{2\varepsilon'}+2\sqrt{2\delta'}}(R|\B)_\omega\big]+ f(\varepsilon',\delta')+ f(\varepsilon,\delta)~,
\end{equation}
	where the optimization is over all pure bipartite states $\psi_{RA}$ and $\varepsilon' = \delta' \equiv 3\sqrt{2\varepsilon}+2\sqrt{2\delta}$. (Note that, in the above expression, we have consolidated the systems $RR'$ external to the channel as a single system $R$.) 
The bound above is not as tight as that stated in Theorem~\ref{thm:one-shot-upper-bound-degraded}, but it has the advantage that the reference system $R$ need not be any larger than the channel input system $A$, due to the Schmidt decomposition theorem.
\end{remark}

\subsection{Asymptotic analysis}

In this section, we first define the EA private information of a quantum broadcast channel $\N_{A\rightarrow\B\cup\E}$. We then show that the EA private information of a channel $\N_{A\rightarrow\B\cup\E}$ is additive if the channel is degraded. Finally, we prove that the EA private capacity of a degraded broadcast channel $\N_{A\rightarrow\B\cup\E}$ is given by the EA private information of the channel. 

We define the EA private information of a quantum broadcast channel $\N_{A\rightarrow\B\cup\E}$ as
\begin{align}
\P_{EA}(\N)\equiv\sup_{\rho_{RA}}[ I(R;\B)_\omega - I(R;\E)_\omega]~, \label{eq:EA-private-information}
\end{align}
where $\rho_{RA}$ is an arbitrary quantum state and $\omega_{R\B\cup\E} = \N_{A\rightarrow\B\cup\E}(\rho_{RA})$. 

We now show that it is sufficient to maximize \eqref{eq:EA-private-information} with respect to only pure states if the broadcast channel $\N_{A\rightarrow \B\cup\E}$ is degraded. Consider an arbitrary input state $\rho_{RA}$ and the following chain of inequalities:
\begin{align}
 & I(R;\B)_\omega - I(R;\E)_\omega \notag \\
 &=
 I(RA'; \B)_\sigma - I(A';\B R)_\sigma 
 + I(R;A')_\sigma - [I (RA';\E)_\sigma -I(A';\E R)_\sigma + I(R;A')_\sigma]
 \\
 & = I(RA'; \B)_\sigma - I (RA';\E)_\sigma - [I(A';\B R)_\sigma-I(A';\E R)_\sigma]~\\
&\leq I(RA'; \B)_\sigma - I (RA';\E)_\sigma ~\\
& \leq \max_{\phi_{RA}} \, [I(R;\B)_{\omega} - I(R;\E)_{\omega}],
\end{align}
where $\phi_{RAA'}$ is a purification of the input state $\rho_{RA}$, and $\sigma= \N_{A\rightarrow \B\cup\E}(\phi_{RAA'})$. The first equality follows from the identity $I(A;C) = I(A;BC) - I(AC;B) + I(B;C) $. The first inequality follows from the fact that quantum mutual information decreases under the action of a degrading channel from $\B$ to $\E$. The last inequality follows from a simple relabeling of variables, and due to maximization over all input pure states. Since the chain of inequalities is true for all input states $\rho_{RA}$, the EA private information of a degraded broadcast channel is given by
\begin{align}
\P_{EA}(\N) = \max_{\phi_{RA}} \, [ I(R;\B)_\omega - I(R;\E)_\omega]~, \label{eq:EA-private-information-degraded}
\end{align}
where $\phi_{RA}$ is an arbitrary pure quantum state, and $\omega_{R\B\cup\E} = \N_{A\rightarrow\B\cup\E}(\phi_{RA})$. 

In the following lemma we prove that the EA private information of a degraded quantum broadcast channel is additive. 
\begin{lemma}\label{lemma: additivity}
	Let $\N_{A_1\rightarrow\B_1\cup\E_1}$ and $\M_{A_2\rightarrow\B_2\cup\E_2}$ be degraded quantum broadcast channels. Then
	\begin{align}
	\P_{EA}(\N \otimes \M) = \P_{EA}(\N) + \P_{EA}(\M)~,
	\end{align}
	where the EA private information of the channel $\P_{EA}(\N)$ is defined as \eqref{eq:EA-private-information}~.
\end{lemma}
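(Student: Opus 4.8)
The plan is to prove the two inequalities $\P_{EA}(\N\otimes\M)\geq\P_{EA}(\N)+\P_{EA}(\M)$ and $\P_{EA}(\N\otimes\M)\leq\P_{EA}(\N)+\P_{EA}(\M)$ separately. Superadditivity is immediate: restricting the optimization for $\N\otimes\M$ to product inputs $\phi^{(1)}_{R_1A_1}\otimes\phi^{(2)}_{R_2A_2}$ with joint reference $R=R_1R_2$ produces a product output, so that both $I(R;\B_1\B_2)$ and $I(R;\E_1\E_2)$ factorize additively and the objective splits into the two single-channel objectives; optimizing each factor gives $\P_{EA}(\N)+\P_{EA}(\M)$ as a lower bound.

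The substantial direction is subadditivity. I would start from an arbitrary input $\rho_{RA_1A_2}$, set $\omega=(\N\otimes\M)(\rho_{RA_1A_2})$, and bound $I(R;\B_1\B_2)_\omega-I(R;\E_1\E_2)_\omega$. The idea is to insert the hybrid marginal $I(R;\E_1\B_2)_\omega$ and write
\[
[I(R;\B_1\B_2)_\omega - I(R;\E_1\B_2)_\omega] + [I(R;\E_1\B_2)_\omega - I(R;\E_1\E_2)_\omega].
\]
Applying the chain rule to each bracket reduces these to conditional mutual informations $I(R;\B_1|\B_2)-I(R;\E_1|\B_2)$ and $I(R;\B_2|\E_1)-I(R;\E_2|\E_1)$. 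Then, using $I(R;X|Y)=I(RY;X)-I(Y;X)$ to turn each conditioning into an enlarged reference, the first bracket becomes
\[
[I(R\B_2;\B_1)_\omega - I(R\B_2;\E_1)_\omega] - [I(\B_2;\B_1)_\omega - I(\B_2;\E_1)_\omega],
\]
and the second bracket becomes the analogue with $R\E_1$ playing the role of the reference. The leading square bracket in each line is exactly of the form optimized in $\P_{EA}(\N)$ (reference $R\B_2$, input $A_1$) and $\P_{EA}(\M)$ (reference $R\E_1$, input $A_2$), since the relevant reduced states are obtained by first applying one channel and then the other; hence they are bounded by $\P_{EA}(\N)$ and $\P_{EA}(\M)$. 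The remaining cross terms $I(\B_2;\B_1)-I(\B_2;\E_1)$ and $I(\E_1;\B_2)-I(\E_1;\E_2)$ are nonnegative and can therefore be discarded. This nonnegativity is exactly where degradability enters: the degrading channel $\T_1\colon\B_1\to\E_1$ sends $\omega_{\B_1\B_2}$ to $\omega_{\E_1\B_2}$, so data processing gives $I(\B_2;\B_1)\geq I(\B_2;\E_1)$, and symmetrically $\T_2\colon\B_2\to\E_2$ gives $I(\E_1;\B_2)\geq I(\E_1;\E_2)$. Taking the supremum over $\rho_{RA_1A_2}$ yields the claimed subadditivity, and combining with superadditivity gives the equality.

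I expect the main obstacle to be the bookkeeping of the degrading-channel action on the tensor-product output, namely verifying that $(\T_1\otimes\mathrm{id}_{\B_2})(\omega_{\B_1\B_2})=\omega_{\E_1\B_2}$ at the level of the induced states. This follows because $\N$ and $\M$ act on disjoint tensor factors, so $\T_1$ commutes with $\M$'s action on $A_2$, and the channel-level identity $\T_1\circ\N_{A_1\to\B_1}=\N_{A_1\to\E_1}$ from Definition~\ref{def:degraded-broadcast} then propagates through. Once this identity and its $\T_2$ counterpart are confirmed, the two applications of data processing and the two applications of the definition of $\P_{EA}$ assemble directly into $\P_{EA}(\N\otimes\M)\leq\P_{EA}(\N)+\P_{EA}(\M)$.
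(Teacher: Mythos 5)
Your proof is correct, and it is structurally the same argument as the paper's, with one genuinely different choice in the telescoping. The paper also handles the trivial direction via product inputs, and for subadditivity it likewise expands $I(R;\B_1\B_2)_\omega-I(R;\E_1\E_2)_\omega$ by chain rules into two single-channel terms with enlarged references plus a nonnegative remainder; but its decomposition yields the references $R\E_2$ (for $\N$) and $R\B_1$ (for $\M$), i.e.\ the bound
\begin{equation*}
\left[I(R\E_2;\B_1)_\omega - I(R\E_2;\E_1)_\omega\right] + \left[I(R\B_1;\B_2)_\omega - I(R\B_1;\E_2)_\omega\right] - \left[I(\B_1;\B_2)_\omega - I(\E_1;\E_2)_\omega\right],
\end{equation*}
in which the single cross term $I(\B_1;\B_2)_\omega - I(\E_1;\E_2)_\omega$ is nonnegative only by applying \emph{both} degrading channels jointly, $(\T_1\otimes\T_2)(\omega_{\B_1\B_2})=\omega_{\E_1\E_2}$. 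Your insertion of the hybrid $I(R;\E_1\B_2)_\omega$ instead produces references $R\B_2$ and $R\E_1$ and two separate cross terms, each killed by one degrading channel at a time; the two routes are mirror images and both assemble into $\P_{EA}(\N)+\P_{EA}(\M)$. Two small advantages of your version: it is more modular (each application of degradability is local to one channel), and by bounding the objective for an \emph{arbitrary} input $\rho_{RA_1A_2}$ and taking the supremum at the end, it sidesteps the paper's invocation of a maximizing state $\phi_{RA_1A_2}$ and of the pure-state formula \eqref{eq:EA-private-information-degraded} for the degraded tensor-product channel, neither of which is actually needed. Your flagged bookkeeping point is resolved exactly as you say: since $\N$ and $\M$ act on disjoint tensor factors, the reduced state $\omega_{R\B_2\B_1\cup\E_1}$ equals $\N_{A_1\to\B_1\cup\E_1}$ applied to the state on $R\B_2 A_1$ obtained by applying $\M$ first and tracing out the rest, so the leading brackets are legitimately of the form optimized in \eqref{eq:EA-private-information}, and the channel-level identity from Definition~\ref{def:degraded-broadcast} tensored with the identity gives the marginal identities used in data processing.
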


	\begin{proof}
		We first prove the trivial inequality $\P_{EA}(\M) + \P_{EA}(\N) \leq \P_{EA}(\N\otimes \M)$, which holds for arbitrary quantum broadcast channels. Let $\rho_{R_1A_1}$, and $\sigma_{R_2A_2}$ be arbitrary input states. Then the following chain of inequalities holds:
		\begin{align}
		& I(R_1;\B_1)_{\N(\rho)} - I (R_1;\E_1)_{\N(\rho)} + I(R_2;\B_2)_{\M(\sigma)} - I (R_2;\E_2)_{\M(\sigma)} \notag \\
		& = I(R_1R_2;\B_1\B_2)_{(\N\otimes \M)(\rho\otimes \sigma)} - I(R_1R_2;\E_1\E_2)_{(\N\otimes \M)(\rho\otimes \sigma)}\\
		&\leq \P_{EA}(\N\otimes \M). 
		\end{align}
		The first equality follows from the definition of $\P_{EA}(\N)$. The second equality follows from additivity of mutual information with respect to tensor-product states. The final inequality follows because the input state $\rho_{R_1A_1}\otimes \sigma_{R_2A_2}$ is a particular state of the more general form $\rho_{RA_1A_2}$ needed in the optimization of the EA private information of the tensor-product channel $\N \otimes \M$. Since the inequality holds for all input states, we conclude that
		\begin{equation}
		\P_{EA}(\M) + \P_{EA}(\N) \leq \P_{EA}(\N\otimes \M).
		\end{equation}
		
		We now prove the non-trivial inequality $\P_{EA}(\N\otimes \M) \leq 
		\P_{EA}(\M) + \P_{EA}(\N) $ for degradable broadcast channels. First note that the tensor-product channel $\N\otimes \M$ is degradable because the channels individually are. So the equality in \eqref{eq:EA-private-information-degraded} applies. Let $\phi_{RA_1A_2}$ be a state that maximizes $\P_{EA}(\N\otimes \M)$, and let $\omega_{R\B_1\E_1\B_2\E_2} = (\N_{A_1\to \B_1\E_1}\otimes\M_{A_2\to \B_2\E_2})(\phi_{RA_1 A_2})$. Consider the following chain of inequalities:
		\begin{align}
		\P_{EA}(\N\otimes \M)&= I(R;\B_1\B_2)_\omega-I(R;\E_1\E_2)_\omega\\
		&= I(R;\B_1)_\omega+I(R;\B_2\vert\B_1)_\omega-I(R;\E_2)_\omega -I(R;\E_1\vert\E_2)_{\omega}\\
		&=  I(R;\B_1\vert\E_2)_\omega+I(R;\B_2\vert \B_1)_\omega - I(R;\E_2\vert \B_1)_\omega - I(R;\E_1\vert \E_2)_\omega\\ 
		& = I(R\E_2;\B_1)_\omega - I(R\E_2;\E_1) + I(R\B_1;\B_2) - I(R\B_1;\E_2) \notag \\
		& \qquad - [I(\B_1;\B_2)- I(\E_1;\E_2)]\\
		&\leq \P_{EA}(\N) +\P_{EA}(\M)~.
		\end{align}
	The second equality follows from an application of the chain rule for quantum mutual information. The third equality follows from the identity $I(A;B\vert C) = I(A;BC) - I(A;C)$. The fourth equality follows from another expression for the conditional quantum mutual information  $I(A;B\vert C) = I(AC;B) - I(B;C)$, and from a simple rearrangement. The last inequality follows from the definition of EA private information of channels $\N$ and $\M$, and from the fact that quanutm mutual information decreases under the action of a tensor product of two degrading channels.  
			\end{proof}

We now prove that the EA private capacity of a degraded quantum broadcast channel $\N_{A\rightarrow \B\cup\E}$ is given by the EA private information of the channel in the following theorem.  
\begin{theorem}\label{thm:iid_capacity}
	Let $\N_{A\rightarrow \B\cup\E}$ be a degraded quantum broadcast channel. Then the EA private capacity $C_{\operatorname{EP}} (\N)$ of the channel $\N_{A\rightarrow \B\cup\E}$ is given by
	\begin{align} 
	C_{\operatorname{EP}} (\N)= \P_{EA}(\N),
	\end{align}
	where EA private information $\P_{EA}(\N)$ is defined as \eqref{eq:EA-private-information-degraded}~.
\end{theorem}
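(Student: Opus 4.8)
The goal is the two matching bounds $C_{\operatorname{EP}}(\N)\geq \P_{EA}(\N)$ and $C_{\operatorname{EP}}(\N)\leq \P_{EA}(\N)$. The plan is to apply the one-shot bounds of Theorems~\ref{thm: one-shot-lower-bound} and~\ref{thm:one-shot-upper-bound-degraded} to the tensor-power channel $\N^{\otimes n}$, and to single-letterize using the additivity result of Lemma~\ref{lemma: additivity}, noting first that $\N^{\otimes n}$ is itself degraded whenever $\N$ is.

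\emph{Achievability.} I would input the i.i.d.\ state $\rho_{RA}^{\otimes n}$ to $\N^{\otimes n}$ and invoke the second-order lower bound \eqref{eq:2nd-order-lower-bnd} (equivalently, Theorem~\ref{thm: one-shot-lower-bound} together with the quantum Stein convergence $\frac1n I_H^{\epsilon}(R^n;\B^n)_{\omega^{\otimes n}}\to I(R;\B)_\omega$ and the asymptotic equipartition property $\frac1n \tilde{I}_{\max}^{\epsilon}(\E^n;R^n)_{\omega^{\otimes n}}\to I(R;\E)_\omega$). Dividing by $n$, the leading term is $I(R;\B)_\omega - I(R;\E)_\omega$ while the remaining contributions are $O(n^{-1/2})$ and $O(n^{-1}\log n)$; hence $\liminf_{n\to\infty}\frac1n C_{\operatorname{EP}}^{\varepsilon,\delta}(\N^{\otimes n})\geq I(R;\B)_\omega - I(R;\E)_\omega$ for each fixed $\varepsilon,\delta$ and each $\rho_{RA}$. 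Since the left-hand side does not depend on $\rho_{RA}$, taking the supremum over inputs and then $\varepsilon,\delta\to 0$ gives $C_{\operatorname{EP}}(\N)\geq \P_{EA}(\N)$.

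\emph{Converse.} I would apply the degraded one-shot upper bound---in the bounded-reference form of the Remark after Theorem~\ref{thm:one-shot-upper-bound-degraded}, so that $R$ purifies $A^n$ and $\log\dim\H_R=n\log\dim\H_A$---to obtain $\frac1n C_{\operatorname{EP}}^{\varepsilon,\delta}(\N^{\otimes n})\leq \frac1n\sup_{\psi_{RA^n}}[H_{\min}^{\mu}(R|\E^n)_\omega - H_{\max}^{\mu}(R|\B^n)_\omega]+o(1)$, where the additive $f$-terms vanish after division by $n$ and $\mu$ is a fixed function of $\varepsilon,\delta$ with $\mu\to 0$ as $\varepsilon,\delta\to 0$. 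The key step is to strip the smoothing: using $H_{\min}\leq H\leq H_{\max}$ together with the Alicki--Fannes--Winter continuity bound for the conditional entropy, each smoothed term differs from its von Neumann value by at most $2\mu\log\dim\H_R+g(\mu)$, so that $H_{\min}^{\mu}(R|\E^n)_\omega - H_{\max}^{\mu}(R|\B^n)_\omega\leq H(R|\E^n)_\omega - H(R|\B^n)_\omega + 4\mu\, n\log\dim\H_A + O(1)$. I would then use $H(R|\E^n)_\omega - H(R|\B^n)_\omega = I(R;\B^n)_\omega - I(R;\E^n)_\omega$, recognize the supremum over $\psi_{RA^n}$ of the right-hand side as $\P_{EA}(\N^{\otimes n})$, and collapse it to $n\,\P_{EA}(\N)$ by iterating Lemma~\ref{lemma: additivity}. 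Dividing by $n$ yields $\liminf_{n\to\infty}\frac1n C_{\operatorname{EP}}^{\varepsilon,\delta}(\N^{\otimes n})\leq \P_{EA}(\N)+O(\mu)$, and letting $\varepsilon,\delta\to 0$ completes the converse.

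The main obstacle is precisely this single-letterization in the converse: the one-shot upper bound for $\N^{\otimes n}$ optimizes over arbitrary, possibly correlated, input states, so the i.i.d.\ quantum AEP cannot be applied directly to the smoothed entropies. The resolution rests on the order of limits---passing from smoothed min/max entropies to von Neumann conditional entropies via continuity costs an error that is linear in $n$ but proportional to the smoothing parameter $\mu$, which therefore survives the division by $n$ yet is eliminated once $\varepsilon,\delta$ (hence $\mu$) are sent to zero \emph{after} the $n\to\infty$ limit. Additivity of the von Neumann private information (Lemma~\ref{lemma: additivity}) is what reduces the regularized expression to the single-letter formula $\P_{EA}(\N)$.
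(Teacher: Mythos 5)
Your proposal is correct, and while your achievability part coincides with the paper's (i.i.d.\ inputs plus the second-order expansion \eqref{eq:2nd-order-lower-bnd}), your converse takes a genuinely different route. The paper starts from the \emph{general} one-shot upper bound of Theorem~\ref{thm:one-shot-upper-bound}, keeping the classical message register $M$: it converts $H_{\min}^{\sqrt{2\delta}}(M|R\E^n)-H_{\max}^{\sqrt{2\varepsilon}}(M|R\B^n)$ to von Neumann quantities via \eqref{eq:smooth-min-to-VN}--\eqref{eq:smooth-max-to-VN} (so the continuity cost is $\sim(\sqrt{\varepsilon}+\sqrt{\delta})\log_2 M$, i.e., proportional to the rate itself), rewrites the result as $I(MR;\B^n)-I(MR;\E^n)-[I(R;\B^n)-I(R;\E^n)]$, discards the bracket using degradedness, and only then reduces to pure inputs and invokes Lemma~\ref{lemma: additivity}. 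You instead push degradedness into the one-shot statement by using Theorem~\ref{thm:one-shot-upper-bound-degraded} in its bounded-reference form from the Remark, which is the right instinct: Theorem~\ref{thm:one-shot-upper-bound-degraded} itself has an unbounded reference $R$, so your dimension-dependent continuity step would fail there, whereas the Remark's Schmidt-decomposition reduction gives $\log_2\dim\H_R\leq n\log_2\dim\H_A$ and makes the cost $O(\mu\,n\log_2\dim\H_A)$, killed by sending $\varepsilon,\delta\to 0$ after $n\to\infty$, exactly as you argue. (For full rigor, replace your appeal to Alicki--Fannes--Winter by the paper's own inequalities \eqref{eq:smooth-min-to-VN}--\eqref{eq:smooth-max-to-VN}, which already handle the subnormalized states in the smoothing ball.) The trade-off: your route yields a cleaner asymptotic argument with no message register and no mutual-information rearrangement, since pure inputs and degradedness are already built in, but it inherits the Remark's doubly-iterated smoothing parameter $\mu$ and the constraints $\varepsilon,\delta<\frac{1}{8}$, $3\sqrt{2\varepsilon}+2\sqrt{2\delta}<1$ (harmless in the limit but worse quantitatively), while the paper's route avoids these constraints and keeps the error term rate-proportional, at the price of the extra chain-rule decomposition and the purification step within the asymptotic proof. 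Both converses ultimately rest on the same two pillars you identify: the pure-state form \eqref{eq:EA-private-information-degraded} for degraded channels and the additivity Lemma~\ref{lemma: additivity}.
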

\begin{proof}
	The direct part follows immediately from the one-shot lower bound established in Theorem~\ref{thm: one-shot-lower-bound} and the subsequent second-order expansion discussed after it in \eqref{eq:2nd-order-lower-bnd}.

	For the converse part, we begin with our one-shot upper bound established in Theorem~\ref{thm:one-shot-upper-bound}. Let $\sigma_{MR\B^n\cup \E^n} = \N^{\otimes n}_{A^n \to \B^n \cup \E^n}(\rho_{MRA^n})$.  Consider the following chain of inequalities: 
	\begin{align}
	C^{\varepsilon, \delta}_{\operatorname{EP}}(\N^{\otimes n}) &\leq \sup_{\rho_{MRA^n}}\big[ H_{\min}^{\sqrt{2\delta}}(M\vert R\E^n)_{\sigma}- H_{\max}^{\sqrt{2\varepsilon}}(M \vert R\B^n)_{\sigma}\big]\\ 
	&\leq \sup_{\rho_{MRA^n}}\big[H(M \vert R\E^n)_{\sigma} - H(M \vert RB^n )_{\sigma} \big] + f(\varepsilon, \delta, M)\\
	& =  \sup_{\rho_{MRA^n}} \big[I(M;R \B^n)_{\sigma} - I(M; R\E^n)_{\sigma}\big]+ f(\varepsilon, \delta, M)\\
	& =  \sup_{\rho_{MRA^n}} \big[I(MR; \B^n)_{\sigma} - I(M R;\E^n)_{\sigma} -[I(R; \B^n)_{\sigma} - I(R; \E^n)_{\sigma}] \big] + f(\varepsilon, \delta, M)\\
	&\leq  \sup_{\rho_{RA^n}} \big[I(R; \B^n) - I(R;\E^n)\big]_{\sigma} + f(\varepsilon, \delta, M)~,\label{eq:upper-bound-supremum-mixed-states}
	\end{align}
	where $f(\varepsilon, \delta, M) = 8 (\sqrt{2 \delta}+\sqrt{2\varepsilon}) \log_2 M +2[h_2(\sqrt{8\delta})+h_2(2\sqrt{8\varepsilon})) ]$.
	The first inequality follows by an application of Theorem \ref{thm:one-shot-upper-bound} to the tensor-power channel  $\N^{\otimes n}$. The second inequality is a consequence of  the following inequalities \cite{renes2011noisy}:
	\begin{align}
	H_{\min}^\varepsilon(A|B)_\rho\leq H(A\vert B)_\rho + 8\varepsilon\log_2\dim (A) + 2h_2(2\varepsilon)~,\label{eq:smooth-min-to-VN}\\
	H_{\max}^\varepsilon(A|B)_\rho\geq H(A\vert B)_\rho - 8\varepsilon\log_2\dim (A) -2h_2(2\varepsilon)~\label{eq:smooth-max-to-VN}.
	\end{align}
	In the above, $h_2(x)=-x\log_2 x- (1-x)\log_2(1-x)$ is the binary entropy. (Note that one could obtain improved parameters in \eqref{eq:smooth-min-to-VN} and \eqref{eq:smooth-max-to-VN} by employing recent developments in \cite{Winter15}.)   The second equality follows from the chain rule for quantum mutual information. The last inequality follows because there is a degrading channel from $\B^n$ to $\E^n$, so that the quantum data-processing inequality implies that $I(R; \B^n) \geq I(R; \E^n)$. 
	
	Next, we show that it is sufficient to take a supremum over only pure quantum states in \eqref{eq:upper-bound-supremum-mixed-states}. Consider the following chain of inequalities:
	\begin{align}
& I(R;\B^n)_{\N^{\otimes n}(\rho)}  - I(R;\E^n)_{\N^{\otimes n}(\rho)}  \nonumber \\
	&=   I(FR;\B^n)_{\N^{\otimes n}(\phi)} - I(FR;\E^n)_{\N^{\otimes n}(\phi)} -[I(F;\B^nR)_{\N^{\otimes n}(\phi)} - I(F;\E^nR)_{\N^{\otimes n}(\phi)}] \\
	 & \leq I(FR;\B^n)_{\N^{\otimes n}(\phi)} - I(FR;\E^n)_{\N^{\otimes n}(\phi)} \\
	 & \leq \P_{EA}(\N^{\otimes n})\\
	 & = n \P_{EA}(\N)~,
	\end{align}
	where $\phi_{FRA^n}$ is a purification of the state $\rho_{RA^n}$. The first equality follows from the chain rule for quantum mutual information. The first inequality follows from the fact that quantum mutual information decreases under the action of a quantum channel (particularly, a degrading channel from $\B^n$ to $\E^n$). The second equality follows from the definition of EA private information of a quantum channel, as defined in \eqref{eq:EA-private-information-degraded}. The last equality follows from Lemma \ref{lemma: additivity}. Since the chain of inequalities is true for all input states $\rho_{R A^n}$, we conclude that
	\begin{align}
	C^{\varepsilon, \delta}_{\operatorname{EP}}(\N^{\otimes n}) \leq  n \P_{EA}(\N) +  f(\varepsilon,\delta, M)~.
	\end{align}
	We now divide both sides of the last inequality by $n$ and take the limits $n\rightarrow\infty$, and $\varepsilon,\delta\rightarrow 0$:
	\begin{align}
	C_{\operatorname{EP}}(\N) & \leq \lim_{\varepsilon,\delta\rightarrow 0}\liminf_{n\rightarrow\infty}\big[\P_{EA}(\N) +\frac{1}{n}f(\varepsilon,\delta, M)]\\
	& = \P_{EA}(\N)~.	\label{eq:CER_upper}
	\end{align}
	Hence, the lower bound in \eqref{eq:2nd-order-lower-bnd} and the upper bound in \eqref{eq:CER_upper} imply that
	\begin{align}
	C_{EP}(\N) = \P_{EA}(\N)~.
	\end{align}
	This concludes the proof.
	\end{proof}

\section{Examples of degraded quantum broadcast channels}
\label{sec:ex}

In this section, we consider EA private communication over two specific instances of degraded quantum broadcast channels. Moreover, we establish an operational meaning for the conditional quantum mutual information of a quantum broadcast channel, as a dynamic counterpart of the prior result from \cite{sharma2017conditional}.

\subsection{When Eve has access to the pre-shared entanglement and some part of Bob's laboratory}

\begin{figure}
	\centering
	\includegraphics[scale=0.7]{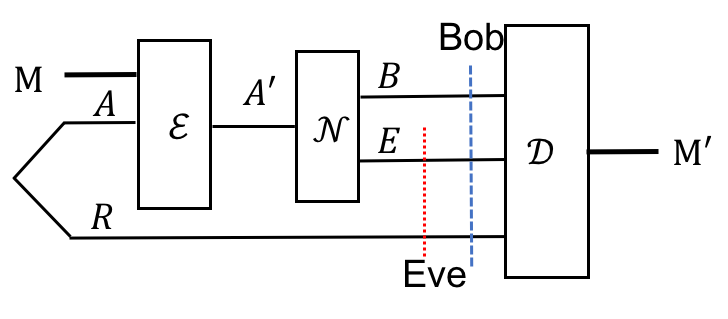}
	\caption{The information-processing task for EA private communication via a two-receiver broadcast channel. Bob, who has access to the systems $R$, $B$ and $E$, can reliably decode the transmitted message. However, Eve, who has access systems $R$ and $E$, can only get negligible information about Alice's message. }\label{fig:case-1}
\end{figure}

We consider a special case of the protocol considered in Section ~\ref{sec:capacity-degrade}, as shown in Figure~\ref{fig:case-1}.
Alice would like to transmit a classical message $m$ from a set $\mathcal{M} \equiv \{1,...,M\}$, to Bob via a quantum channel $\N_{A\rightarrow BE}$, where Bob's lab is separated into two parts. Bob has access to both parts, but we assume that the second part of the Bob's lab ($E$) is insecure---a malicious third party Eve has access to this second part. Moreover, Alice and Bob pre-share arbitrary entanglement to assist their communication. The quantum channel $\N_{A\rightarrow BE}$ is a special case of previously defined degraded quantum broadcast channel, where we identify $\B=BE$ and $\E=E$. Hence, the degradable channel from $\B$ to $\E$ is simply the partial trace over the $B$ system.

We define an $(M,\varepsilon, \delta)$ code to be a set of encoding channels and a decoding POVM $\brac{\E^m_{A\rightarrow A'}, \Lambda^m_{RBE}}_m$, such that

\begin{enumerate}
\item the classical messages can be reliably decoded by Bob,
\begin{align}
\max_{m\in\M}p_e(m)\leq \varepsilon~,
\end{align}
where $
p_e(m) = \tr\brac{(I-\Lambda^m_{RBE})\rho^m_{RBE}}$, 
$\rho^m_{RBE} \equiv\N_{A'\rightarrow BE}(\E_{A\rightarrow A'}(\Psi_{RA}))
$, and
\item each classical message is $\delta$-secure:
\begin{align}
\frac{1}{2}\Vert \rho^m_{RE}-\sigma_{RE}\Vert_1\leq\delta,~~~\forall m\in\M~,
\end{align}
where $\sigma_{RE}$ is some constant state. 
\end{enumerate}

\begin{corollary} \label{thm:i.i.d.}
	Let $\N_{A\rightarrow BE}$ be an arbitrary quantum broadcast channel. Then the EA private capacity for the scenario discussed above is given by conditional quantum mutual information of the channel $\N_{A\rightarrow BE}$,
	\begin{align}
	C_{\operatorname{EP}}(\N) = \operatorname{CMI}(\N)~,
	\end{align}
	where $\operatorname{CMI}(\N)$ is defined as 
	\begin{align}
	\operatorname{CMI}(\N) =\max_{\phi_{RA}}I(R;B|E)_\omega~,
	\end{align}
	$\phi_{RA}$ is a pure bipartite state, and $\omega_{RBE}=\N_{A\rightarrow BE}(\phi_{RA})$.
\end{corollary}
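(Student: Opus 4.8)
The plan is to derive this as a direct corollary of Theorem~\ref{thm:iid_capacity}, the single-letter capacity formula for degraded quantum broadcast channels. First I would verify that the channel $\N_{A\rightarrow BE}$ in this scenario genuinely fits the degraded broadcast model of Definition~\ref{def:degraded-broadcast} with the identifications $\B = BE$ (the decoding set) and $\E = E$ (the malicious set). The required degrading channel $\T_{\B \to \E}$ is simply the partial trace over the $B$ system: since $\tr_{B}\{\omega_{RBE}\} = \omega_{RE}$ holds for every input state $\rho_{RA}$, the condition $\T(\tr_{\B\cup\E\setminus\B}\{\N(\rho)\}) = \tr_{\B\cup\E\setminus\E}\{\N(\rho)\}$ is met trivially. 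This places us squarely within the hypotheses of Theorem~\ref{thm:iid_capacity}.

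Next I would invoke Theorem~\ref{thm:iid_capacity} directly to conclude
\begin{equation}
C_{\operatorname{EP}}(\N) = \P_{EA}(\N) = \max_{\phi_{RA}}\left[I(R;\B)_\omega - I(R;\E)_\omega\right],
\end{equation}
where the optimization runs over pure bipartite states $\phi_{RA}$ by \eqref{eq:EA-private-information-degraded}, and $\omega = \N_{A\rightarrow BE}(\phi_{RA})$. Substituting the identifications $\B = BE$ and $\E = E$ yields
\begin{equation}
C_{\operatorname{EP}}(\N) = \max_{\phi_{RA}}\left[I(R;BE)_\omega - I(R;E)_\omega\right].
\end{equation}

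The final step is a single application of the chain rule for quantum mutual information, namely $I(R;BE)_\omega = I(R;E)_\omega + I(R;B|E)_\omega$, which immediately gives $I(R;BE)_\omega - I(R;E)_\omega = I(R;B|E)_\omega$. Hence
\begin{equation}
C_{\operatorname{EP}}(\N) = \max_{\phi_{RA}} I(R;B|E)_\omega = \operatorname{CMI}(\N),
\end{equation}
as claimed. There is no genuine obstacle here; the content is entirely carried by Theorem~\ref{thm:iid_capacity}, and this corollary is best viewed as an interpretive observation: the difference of mutual informations defining the EA private information collapses to the conditional mutual information precisely because the malicious set $E$ is a subsystem of the decoding set $BE$. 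The only point requiring a moment's care is confirming that the optimization may be restricted to pure states $\phi_{RA}$, which is already guaranteed by \eqref{eq:EA-private-information-degraded} for degraded channels.
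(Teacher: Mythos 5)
Your proposal is correct and takes essentially the same route as the paper's own proof: identify $\B = BE$ and $\E = E$ so that the partial trace over $B$ serves as the degrading channel, invoke Theorem~\ref{thm:iid_capacity} (with purity of the optimizer guaranteed by \eqref{eq:EA-private-information-degraded}), and collapse $I(R;BE)_\omega - I(R;E)_\omega$ to $I(R;B|E)_\omega$ via the chain rule. The paper's proof is exactly this argument, stated more tersely, so there is nothing to add or repair.
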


\begin{proof}
	 Let $\B=BE$ and $\E=E$. Then $\N_{A\rightarrow \B \E}$ is a degraded quantum broadcast channel, such that the partial trace over system $B$ is the degrading channel from $\B$ to $\E$. Using this in Theorem~\ref{thm:iid_capacity}, we get
     \begin{align}
     C_{\operatorname{EP}}(\N) &= \max_{\phi_{RA}}[I(R;BE)_\omega - I(R;E)_\omega]
     	=  \max_{\phi_{RA}}I(R;B|E)_\omega~,
     \end{align}
where the last equality follows from the definition of conditional quantum mutual information.
%
%
%
\end{proof}

\subsection{When Eve has access to the pre-shared entanglement and no access to Bob's laboratory}
\label{sec:dchannel}

\begin{figure}
	\centering
	\includegraphics[scale=0.7]{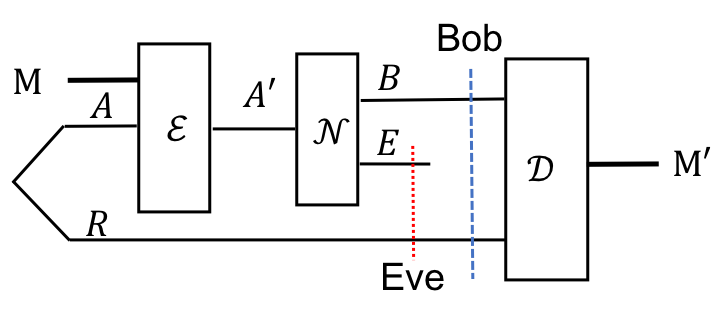}
	\caption{The information-processing task for EA private communication through a quantum channel $\N$. Bob, who has access to the systems $R$ and $B$, can reliably decode the transmitted message. However, Eve, who has access systems $R$ and $E$, can only get negligible information about Alice's message.  }\label{fig:case-2}
\end{figure}

In this section, we consider another special case of the protocol considered in Section~\ref{sec:capacity-degrade}, as shown in Figure~\ref{fig:case-2}. Similarly, Alice would like to securely transmit a classical message $m$, chosen from a set $\M \equiv \{1,...,M\}$, to Bob via a quantum channel $\N_{A\rightarrow BE}$. We assume that there is an eavesdropper who can access systems $R$ and $E$. We also suppose now that there exists a degrading channel from $B$ to $E$.

 We now define an $(M, \varepsilon, \delta)$ code to be a set  of encoding channels and a decoding POVM $\{\E^m_{A}, \Lambda^m_{RB} \}_m$, such that 
\begin{enumerate}
\item the classical messages can be reliably decoded by Bob,
\begin{align}\label{ineq:reliable-2}
\max_{m\in\M}p_e(m)\leq \varepsilon~,
\end{align}
where $
p_e(m) = \tr\brac{(I-\Lambda^m_{RB})\rho^m_{RB}}$ and $
 \rho^m_{RBE} \equiv \N_{A'\rightarrow BE} (\E^m_{A \rightarrow A'}(\Psi_{RA}))$,
\item each classical message is $\delta$-secure:
\begin{align}\label{ineq:security-2}
\frac{1}{2}\Vert \rho^m_{RE}-\sigma_{RE}\Vert_1\leq\delta,~~~\forall m\in\M~,
\end{align}
where $\sigma_{RE}$ is some constant state. 
\end{enumerate}

\begin{corollary}
	Let $\N_{A\rightarrow BE}$ be a degraded quantum broadcast channel as defined above. Then the EA private capacity is given by
	\begin{align}
	\max_{\phi_{RA}} [I(R;B)_{\omega}-I(R;E)_{\omega}]~,
	\end{align}
	where $\phi_{RA}$ is any pure quantum state and $\omega_{RBE}=\N_{A\rightarrow BE}(\phi_{RA})$.
\end{corollary}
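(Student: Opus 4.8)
The plan is to recognize this corollary as the immediate specialization of Theorem~\ref{thm:iid_capacity} to the case where the decoding and malicious sets are the single systems $\B=B$ and $\E=E$, exactly paralleling the proof of Corollary~\ref{thm:i.i.d.}. First I would check that the $(M,\varepsilon,\delta)$ code defined by \eqref{ineq:reliable-2} and \eqref{ineq:security-2} is precisely the general $(M,\varepsilon,\delta)$ code of Section~\ref{sec:capacity-degrade} under this identification: Bob decodes with a POVM $\{\Lambda^m_{RB}\}_m$ acting on $RB=R\B$, and the security condition constrains $\rho^m_{RE}=\rho^m_{R\E}$, so both the reliability and secrecy requirements match verbatim.

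Second, I would verify the degradedness hypothesis against Definition~\ref{def:degraded-broadcast}. With $\B=B$ and $\E=E$ we have $\B\cup\E\setminus\B=E$ and $\B\cup\E\setminus\E=B$, so the defining condition reads $\T(\tr_{E}\{\N(\rho)\})=\tr_{B}\{\N(\rho)\}$, i.e.\ $\T(\rho_B)=\rho_E$ for all inputs $\rho$. This is exactly the assumed existence of a degrading channel from $B$ to $E$, so $\N_{A\to BE}$ qualifies as a degraded quantum broadcast channel in the required sense.

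Finally, I would invoke Theorem~\ref{thm:iid_capacity} to obtain $C_{\operatorname{EP}}(\N)=\P_{EA}(\N)$, and then substitute the single-letter pure-state formula \eqref{eq:EA-private-information-degraded}, namely $\P_{EA}(\N)=\max_{\phi_{RA}}[I(R;\B)_\omega-I(R;\E)_\omega]$ with $\omega=\N(\phi_{RA})$. Setting $\B=B$ and $\E=E$ delivers the claimed expression $\max_{\phi_{RA}}[I(R;B)_\omega-I(R;E)_\omega]$. I do not anticipate any real obstacle here: the reduction to pure-state inputs was already established before Theorem~\ref{thm:iid_capacity}, and the only step requiring care is the bookkeeping of which output systems comprise the decoding set versus the malicious set, together with confirming that the assumed $B\to E$ degrading map is precisely the $\T$ demanded by Definition~\ref{def:degraded-broadcast}.
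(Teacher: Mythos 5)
Your proposal is correct and takes essentially the same approach as the paper: its proof likewise just sets $\B = B$ and $\E = E$ and applies Theorem~\ref{thm:iid_capacity}, with the claimed formula then following from the pure-state expression \eqref{eq:EA-private-information-degraded}. Your extra verifications---that the code in \eqref{ineq:reliable-2}--\eqref{ineq:security-2} matches the general definition and that the assumed $B\to E$ map is the degrading channel $\T$ of Definition~\ref{def:degraded-broadcast}---are sound bookkeeping that the paper leaves implicit.
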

\begin{proof}
In this case, $\B = B$ and $\E = E$. Applying Theorem~\ref{thm:iid_capacity}, we find that 
$
C_{EA}(\N) = \max_{\phi_{RA}}[I(R;B)_\omega - I(R;E)_\omega]~.
$
\end{proof}

\section{Conclusion} \label{sec:conclusion}
In this work, we considered EA private communication over quantum broadcast quantum channels.  We established a lower bound on the one-shot EA capacity based on the recent techniques of position-based coding \cite{anshu2017one} and convex splitting \cite{anshu2014quantum}. We also established an upper bound on the one-shot EA private capacity by combining various results on the min- and max-entropy. We defined a quantum broadcast channel to be degraded when there is a quantum channel mapping from Bob's systems to Eve's systems. Using lower and upper bounds on the one-shot EA private capacity, we proved a single-letter EA capacity formula for degraded quantum broadcast channels. As special cases, we found capacities of EA private communication over two-receiver degraded broadcast channels. Especially in the first case, we not only proved a single-letter capacity formula, but also established an operational meaning to conditional quantum mutual information of a quantum broadcast channel. One possible future direction is to investigate whether our upper bound is also a strong converse upper bound.  Another intriguing idea is to generalize our results to the scenario of secret sharing.

\bigskip
\textbf{Acknowledgements.}
We are grateful to Anurag Anshu, Rahul Jain, Felix Leditzky, and Naqueeb Warsi for discussions related to the topic of this paper.
All authors acknowledge support from the Department of Physics and Astronomy at LSU and the National Science Foundation under Grant No.~1714215.

\bibliographystyle{alpha}
\bibliography{Ref}
\end{document}